\begin{document}
\spnewtheorem{fact}[theorem]{Fact}{\bfseries}{\itshape}
\spnewtheorem{observation}[theorem]{Observation}{\bfseries}{\itshape}

\newcommand{\ie}{\textit{i.e.}}
\newcommand{\eg}{\textit{e.g.}}

\title{Approximation algorithms for node-weighted prize-collecting Steiner tree problems on planar graphs}

\author{Jarosław Byrka\inst{1}, Mateusz Lewandowski\inst{1}, Carsten Moldenhauer\inst{2}}

\institute{University of Wrocław, Poland
\and
EPFL, Lausanne, Switzerland}

  \pagestyle{plain}

\maketitle
\vspace{-6mm}
\begin{abstract}
	We study the prize-collecting version of the Node-weighted Steiner Tree problem (NWPCST) restricted to planar graphs.
	We give a new primal-dual Lagrangian-multiplier-preserving (LMP) 3-approximation algorithm for planar NWPCST.
	We then show a ($2.88 + \epsilon$)-approximation which establishes a new best approximation guarantee for planar NWPCST.
	This is done by combining our LMP algorithm with a threshold rounding technique and utilizing the
	2.4-approximation of Berman and Yaroslavtsev~\cite{FVSplanar2.4} for the version without penalties.
	We also give a primal-dual 4-approximation algorithm for the more general forest version
	using techniques introduced by Hajiaghay and Jain~\cite{Hajiaghayi}.
\end{abstract}
\vspace{-8mm}
\section{Introduction}
\vspace{-4mm}
	In Steiner problems we aim at connecting certain specified vertices (called terminals) by buying edges or nodes of the given graph. The classic edge-weighted setting is well known to have many applications in areas like electronic circuits, computer networking, and telecommunication. The expressive power of the node weighted variants is used to model various settings common to bioinformatics \cite{bioinformatics}, maintenance of electric power networks \cite{powerOutage}, and computational sustainability \cite{sustainability}.

The node weighted setting is a generalization of the edge weighted case. In particular, one may cast the Set Cover problem as an instance of the Node-weighted Steiner Tree problem, which proves hardness of approximation of the general node-weighted setting. In this paper we study a natural special case, namely planar graphs, for which constant factor approximation algorithms are possible. 

In the prize-collecting (penalty-avoiding) setting we are given an option not to satisfy a certain connectivity requirement, but to pay a fixed penalty instead. The main focus of this work is to develop efficient primal-dual approximation algorithms for prize-collecting versions of the node-weighted Steiner problems. 

\subsection{Previous work}
The Steiner Tree problem is NP-hard even in planar graphs \cite{hardnessSTevenPlanar}. 
The most studied is the standard Edge-weighted Steiner Tree, for which the best known approximation ratio 1.39 is obtained via a randomized iterative rounding technique~\cite{Byrka}. By contrast, the best approximation algorithms for Steiner Forest have the so far unbreakable ratio of 2~\cite{SteinerForest,JainIterative}.

For the Prize-collecting Steiner Tree problem there exists a primal-dual 2-approximation algorithm \cite{Goemans}. It can be shown that it is also Lagrangian-preserving. This property was used by Archer et al. to design the currently best $2-\epsilon$ approximation algorithm for PCST \cite{PCST}.

For the Prize-collecting Steiner Forest problem there is a $3$ approximation primal-dual algorithm \cite{Hajiaghayi}, which introduces a general technique to handle prize-collecting problems. In the same paper the authors use a threshold rounding technique with randomized analysis to obtain $\approx 2.54$ approximation.
		
There are optimal (up to a constant factor) algorithms for node-weighted Steiner problems. One example is the recent $O(\ln n)$ approximation algorithm for NWPCSF by Bateni et al~\cite{Bateni}.
K\"onemann et al~\cite{Konemann} gave a Langrangian-multiplier-preserving (LMP) approximation that achieves the same guarantee.
Establishing the LMP property is of crucial importance for the construction of approximation algorithms for quota and budgeted versions
of the NWST problem.

Planarity helps significantly in both edge and node weighted setting. Both ST and SF admit PTAS in planar graphs \cite{SFptas}. Planar PCST can be also approximated with any constant, but PCSF is APX-HARD already in planar graphs~\cite{PCSTptasFhardness}. 

Planarity allows for constant factor approximations for node-weighted Steiner problems. The $NWSF$ can be expressed as the Hitting Set problem for some uncrossing family of cycles and hence solved as a feedback problem. This was exploited by Berman and Yaroslavtsev in \cite{FVSplanar2.4} where they obtained $2.4$ approximation for $NWSF$ and other problems on planar graphs.

In~\cite{MoldenhauerThesis} it was observed that using a threshold rounding technique together with
the 2.4-approximation of Berman and Yaroslavtsev~\cite{FVSplanar2.4} for the version without penalties
gives a 2.93-approximation algorithm for NWPCST on planar graphs. This was the best approximation guarantee up to date.
However, such an algorithm requires solving an LP.

We summarize the current best known results in Table~\ref{table:literature}.

\begin{table}[]
\centering
\label{table-results}
\begin{tabular}{cc|c|c|c|c|}
\cline{3-6}
                                                                                       &                                       & \multicolumn{2}{c|}{{\bf Edge-weighted}}                                         & 
\multicolumn{2}{c|}{{\bf Node-weighted}}                                              \\ \cline{3-6} 
                                                                                       &                                       & {\bf Tree} & {\bf Forest}                                & {\bf Tree}                    & {\bf Forest}                  \\ \hline
\multicolumn{1}{|c|}{}                                         & {\bf General} & 1.39 \cite{Byrka}                               & 2 \cite{Goemans}                                                                  & O(log k) \cite{Bateni}                                             & O(log k) \cite{Bateni}                                             \\ \cline{2-6} 
\multicolumn{1}{|c|}{\multirow{-2}{*}{\bf }}                 & {\bf Planar}  & PTAS \cite{SFptas}                               & PTAS \cite{SFptas}                                                               & 2.4 \cite{FVSplanar2.4}                                & 2.4 \cite{FVSplanar2.4}                                                  \\ \hline
\multicolumn{1}{|c|}{}                                         & {\bf General} & $2-\epsilon $ \cite{PCST}                       & \begin{tabular}[c]{@{}c@{}}3 \\ 2.54 (LP) \cite{Hajiaghayi}\end{tabular} & O(log k) \cite{Bateni,Konemann}                                              & O(log k) \cite{Bateni}                                             \\ \cline{2-6} 
\multicolumn{1}{|c|}{\multirow{-2}{*}{{\bf Prize-collecting}}} & {\bf Planar}  & PTAS \cite{PCSTptasFhardness}                              & APX-HARD \cite{PCSTptasFhardness}                                                           & \begin{tabular}[c]{@{}c@{}}\textbf{3}\\ \textbf{2.87+$\epsilon$ (LP)}\end{tabular} & \begin{tabular}[c]{@{}c@{}}\textbf{4}\end{tabular} \\ \hline
\smallskip
\end{tabular}
\caption{Summary of best known approximation ratios for Steiner problems. Results of this paper are highlighted.}
\label{table:literature}
\end{table}

\subsection{Our contribution}
We propose a new LMP $3$-approximation algorithm for NWPCST on planar graphs. The algorithm is an adaptation of the original technique developed by Goemans and Williamson in \cite{Goemans} for PCST to the node-weighted version.
However, we change the pruning phase of the algorithm. This enables us to analyze the connection and penalty
costs separately which is the key ingredient.
In particular, we can directly charge the penalty costs to a part of the dual solution yielding Langrangian-multiplier-preservation.
Further, the connection costs can be bounded using a slightly adapted analysis from~\cite{Moldenhauer} for NWSF.
The approximation ratio of~$3$ is slightly higher than the previously best approximation ratio
but the primal-dual algorithm does not require solving an LP.

Next, we establish a new best approximation ratio by exploiting the asymmetry of our primal-dual algorithm.
Binding two different linear programs together permits a careful combination of the new LMP algorithm with a threshold
rounding technique. Finally, exploiting the 2.4-approximation from~\cite{FVSplanar2.4} we obtain a
(2.88 + $\epsilon$)-approximation for NWPCST on planar graphs.

Furthermore, we obtain an efficient, direct primal-dual $4$-approximation algorithm for NWPCSF on planar graphs building up on ideas for edge-weighted PCSF from~\cite{Hajiaghayi}. We defer the details of this result to Appendix~\ref{App:AppendixForest}.
This approach was previously indicated by Demaine et al.~\cite{Demaine}, but we give a better constant.

\section{The LMP primal-dual 3-approximation algorithm}
\vspace{-4mm}
	\label{section:nwpcst}
	Consider an undirected graph $G=(V,E)$ with non-negative cost function and penalties on the vertices denoted by $w : V \rightarrow Q_+$ and $\pi : V \rightarrow Q_+$, respectively.
	In the NWPCST problem we are allowed to purchase a \emph{connected} subgraph $F$ of $G$ that connects vertices to a prespecified root $r\in V$. Every bought vertex induces a cost according to $w$. Every vertex that is not included induces a penalty according to $\pi$. The objective is to minimize the sum of the purchase and penalty costs, \ie,
$\sum_{v\in F} w_v + \sum_{v \notin F}\pi_v$.

	By a standard transformation we can assume that for every vertex $v$ either its cost or its penalty is zero. To see this consider a single vertex $v$ with both strictly positive cost and penalty. Add an additional vertex $v'$, set its cost to zero and penalty to $\pi_v$, add an edge from $v'$ to $v$ and set the penalty of $v$ to zero. Now, any solution in the original graph can be transformed to a solution of the same cost in the modified graph and vice-versa.

	In the sequel, we call a vertex with a positive penalty a \emph{terminal}. Terminals and the root can be purchased for free. Other vertices do not have a penalty and we call them non-terminals or Steiner vertices.

	Let $\Gamma(S)$ denote the set of neighbors of $S$, \ie, the set of vertices in $V\setminus S$ incident to vertices from $S \subseteq V$. Let also $\Pi(X) = \sum_{v \in X}\pi_v$. Thus, NWPCST is the following problem:
	\begin{align}
	\text{min} & \sum_{v\in V} w_v x_v + \sum_{X \subseteq V\setminus\{r\}} \Pi(X)z_X & (IP_{PCST}) \nonumber\\
	s.t.\nonumber\\
		& \sum_{v\in \Gamma(S)} x_v + \sum_{X: S \subseteq X} z_X \geq  1			& \forall S \subseteq V\setminus\{r\} \nonumber\\
		& x_v \in \{0,1\}															& \forall v \in V \nonumber\\
		& z_X \in \{0,1\}															& \forall X \subseteq V\setminus\{r\} \nonumber
	\end{align}

	By relaxing the integrality constraints to non-negativity constraints we obtain the standard linear relaxation.
	The dual of this relaxation is
	\begin{align}
	\text{max} & \sum_{S \subseteq V\setminus\{r\}} y_{S} & (DLP_{PCST}) \nonumber\\
	s.t. \nonumber\\
		& \sum_{S:v\in\Gamma(S)} y_S \leq w_v		& \forall v \in V \label{Constraint1}\\
		& \sum_{S\subseteq X} y_S \leq \Pi(X)		& \forall X \subseteq V\setminus\{r\} \label{Constraint2}\\
		& y_S \geq 0								& \forall S \subseteq V\setminus\{r\} \nonumber
	\end{align}

	\subsection{Algorithm}
		Now we shortly describe our primal-dual algorithm which is an adaptation of the generic moat-growing approach of Goemans and Williamson \cite{Goemans}. In each iteration $i$ we maintain a set of already bought nodes $F$. We say that some vertex was bought at time $i$ if it was bought in iteration $i$
		\footnote{When we refer to time we always have in mind the number of the current iteration. Note that it implies that the speed of the uniform growth of dual budgets is not constant across iterations,
		but it does not affect our description of the algorithm.}. At the beginning $F$ contains all terminals (including root). We maintain also the set of connected components $C$ of subgraph $G[F]$ induced by the vertices bought so far. We call each of this connected components a moat. Moats can be active or inactive. The moat containing root $r$ is always inactive. In each iteration we increase (grow) dual variables corresponding to \emph{all active} moats uniformly until one of the following two events happen:
		\begin{itemize}
			\item a vertex $v$ goes tight (constraint (\ref{Constraint1}) becomes equality), or
			\item a set $X$ goes tight (constraint (\ref{Constraint2}) becomes equality).
		\end{itemize}
		In the first case we buy vertex $v$ and possibly merge moats incident to $v$. If we merge to a moat containing the root $r$, this moat becomes inactive, otherwise it is declared active.

		In the second event we make the moat corresponding to set $X$ inactive. Moreover, we mark all unmarked terminals inside $X$ with the current time.

		The growth phase terminates when there are no more active moats. After that, we have a pruning phase. In the pruning phase we let $F^{(r)}$ be the connected component of $F$ containing the root. Then, we consider vertices in $F^{(r)}$ in the reverse order of purchase. We delete vertex $v$ (bought at time $t$) if it does not disconnect from $r$ any terminal which was unmarked at time $t$. When we delete~$v$, we delete also all vertices that become disconnected from $r$. As a result we output the set of bought vertices $F'$ that survived pruning.\\

		Our algorithm can be implemented with a notion of so-called potentials. Let $P(X) = \Pi(X) - \sum_{S \subseteq X} y_S$ be the potential of set $X$. Intuitively, we pay for the growth of moats (increase of dual variables) with potentials of these moats. If the potential of a moat goes to zero, the corresponding constraint becomes tight, so we have to make this moat inactive. When we merge moats to a new moat $S$ by buying a vertex, we compute the potential of $S$ as the sum of potentials of old moats.
	\subsection{Analysis}
		\begin{theorem} (Lagrangian multiplier preservation)
		\label{thm3aprox}
		Let $G$ be planar. The algorithm described in the previous section outputs a set of vertices $F'$ such that
		$$\sum_{v\in F'}w_v + 3 \Pi(V\setminus F') \leq 3\sum_{S\subseteq V\setminus \{r\}}y_S \leq 3\ OPT$$
		\end{theorem}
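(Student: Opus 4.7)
My plan is to establish the two inequalities separately. The second, $3\sum_{S} y_S \le 3\,OPT$, is just weak LP duality: the algorithm raises a variable $y_S$ only while moat $S$ is active, and stops raising it as soon as some constraint~(\ref{Constraint1}) or~(\ref{Constraint2}) becomes tight, so $y$ is feasible for $(DLP_{PCST})$ and $\sum_S y_S$ is bounded by the LP optimum, itself a lower bound on $OPT$.

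For the first inequality I would partition the dual according to the ``history'' of each moat. Let $F^{(r)}$ denote the connected component of the final $F$ containing the root, and $D_1,\dots,D_k$ the remaining components; by the algorithm each $D_i$ could only have been made inactive through constraint~(\ref{Constraint2}) going tight. Because moats never split, every $S$ with $y_S>0$ is contained in exactly one of $F^{(r)},D_1,\dots,D_k$, so
\[\sum_{S} y_S \;=\; Y_{\text{conn}}+Y_{\text{pen}},\qquad Y_{\text{conn}}:=\sum_{S\subseteq F^{(r)}} y_S,\quad Y_{\text{pen}}:=\sum_{i}\sum_{S\subseteq D_i} y_S.\]
When $D_i$ went tight we had $\Pi(D_i)=\sum_{S\subseteq D_i} y_S$, and no further growth occurs on these variables afterwards. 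Every terminal is in $F$ from time $0$ and unmarked at time $0$, so the pruning rule can never discard a terminal from $F^{(r)}$; hence the terminals missing from $F'$ are exactly those inside $\bigcup_i D_i$, and summing yields the exact identity $\Pi(V\setminus F')=\sum_i \Pi(D_i)=Y_{\text{pen}}$. This realises the promised direct charging of penalties to a part of the dual.

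It remains to establish the connection bound $\sum_{v\in F'} w_v \le 3\,Y_{\text{conn}}$. Each purchased $v$ satisfies $w_v=\sum_{S:v\in\Gamma(S)} y_S$, so
\[\sum_{v\in F'} w_v \;=\; \sum_{S} y_S\cdot d_{F'}(S), \qquad d_{F'}(S):=|\{v\in F':v\in\Gamma(S)\}|.\]
For every $S\subseteq D_i$ one has $d_{F'}(S)=0$: since $D_i$ is a connected component of $G[F]$, no vertex of $F\setminus D_i$ is adjacent to $D_i$ in $G$, and $F'\subseteq F^{(r)}\subseteq F\setminus D_i$. Thus only root-history moats contribute, and it suffices to prove the pointwise-in-time inequality
\[\sum_{S\in\mathcal{A}_t} d_{F'}(S)\;\le\;3\cdot|\mathcal{A}_t|, \qquad \mathcal{A}_t:=\{S\subseteq F^{(r)}: S\text{ active at time }t\},\]
and integrate it against the uniform dual growth. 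I would prove the pointwise inequality by contracting each moat in $\mathcal{A}_t$ inside the planar subgraph on $F'\cup\bigcup\mathcal{A}_t$ to a single vertex, obtaining a planar multigraph in which $d_{F'}(S)$ is exactly the degree of the contracted vertex, and then invoking the adapted NWSF analysis of~\cite{Moldenhauer}: combine an Euler-formula average-degree bound with the structural invariant produced by the modified pruning phase — every $v\in F'$ bought at time $t'$ lies on a path from $r$ to some terminal unmarked at time $t'$ — to discharge leaves and degree-$2$ artifacts.

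The main obstacle is precisely this last average-degree step. Planarity on its own only yields a factor~$6$; pushing it down to~$3$ requires carefully exploiting the guarantee of the \emph{new} pruning phase, namely that no redundant Steiner vertex survives for any unmarked terminal at its purchase time, so that every contracted vertex of $F'$ can be ``charged'' by a distinct witness. The remaining ingredients — dual feasibility, the laminar moat history, and the tight-set identity on each $D_i$ — are routine primal-dual bookkeeping.
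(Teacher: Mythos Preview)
Your partition of the dual into $Y_{\text{conn}}$ (moats inside the root component $F^{(r)}$) and $Y_{\text{pen}}$ (moats inside the non-root components $D_i$) is too coarse, and the claimed identity $\Pi(V\setminus F')=Y_{\text{pen}}$ is false. The mistake is the sentence ``the pruning rule can never discard a terminal from $F^{(r)}$''. A terminal is indeed never \emph{directly} pruned (it was bought at time~$0$ and unmarked then), but it can be removed as a side effect: when a Steiner vertex $v$ bought at time~$t$ is pruned, \emph{all} vertices it cuts off from $r$ are deleted with it---including any terminal that was already marked before time~$t$. Concretely, a terminal $u$ with small penalty may have its moat deactivated early, later be absorbed by a neighbouring active moat that eventually reaches~$r$, and then in the pruning pass the Steiner vertex linking $u$'s region to the rest may be dropped because $u$ is already marked. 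Then $u\in F^{(r)}\setminus F'$: its penalty contributes to $\Pi(V\setminus F')$, yet every moat that ever contained $u$ lies inside $F^{(r)}$ and is therefore counted in your $Y_{\text{conn}}$, not $Y_{\text{pen}}$.

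The paper handles this by partitioning along $F'$ itself rather than along the components of $G[F]$: $CC$ consists of moats that intersect $F'$ or have a neighbour in $F'$, and $PC$ is the rest. The penalty identity $\Pi(V\setminus F')=\sum_{S\in PC} y_S$ then needs an additional lemma showing that whenever pruning a vertex $v$ bought at time~$t$ disconnects a set $S$ from $r$, that set is a union of moats that were already \emph{inactive} at time~$t$ and hence had zero remaining potential. Your connection-cost half is essentially on the right track and survives this refinement (every moat outside $CC$ has $d_{F'}(S)=0$ by Lemma~\ref{lemma-neighbors}); what is missing is precisely this second case of the penalty accounting.
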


In the proof we want to use the obtained dual solution $y$ to account for the connection costs
and penalties of the primal solution $F'$. We will partition the $y_S$ into two sets. The first set will yield a bound
on the connection costs and the second a bound on the penalties.

The key ingredient in the analysis is the partition that is based on the following lemma.
Consider any iteration $i$ and the active moats $A_i$ before this iteration.
Let $S\in A_i$ be an active moat that was not included in the final solution, \ie, $S\cap F'=\emptyset$.
Then, the dual variable of $S$ did not contribute to buying any vertex in $F'$. This means that $y_S$ does not
contribute to the left-hand-side of the constraints \eqref{Constraint1} for any $v\in F'$.
More formally, this means that $S$ does not have a neighbor in $F'$.

\begin{lemma}
	\label{lemma-neighbors}
	Let $S\in{A_i}$ be such that $S\subseteq V \setminus F'$. Then, the moat $S$ does not have
	any neighbor in the solution, i.e. $F' \cap \Gamma(S) = \emptyset$
\end{lemma}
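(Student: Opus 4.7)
The plan is to proceed by contradiction: assume there exists $v\in\Gamma(S)\cap F'$ and let $j$ denote the iteration at which $v$ was purchased. The first, easy step is to show $j\ge i$: if $v$ had been bought strictly before iteration $i$, then already at the start of iteration $i$ we would have $v\in F$, and since $v$ is adjacent to some $u\in S\subseteq F$, the edge $uv\in E$ would place $v$ in the same connected component of $G[F]$ as $u$, namely the moat $S$ itself. That contradicts $v\in\Gamma(S)$.

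Next, I would dispose of the case in which the growing moat containing $S$ never merges with the root moat during the growth phase. In that case no vertex of $S$ ends up in $F^{(r)}$. But $v\in F'\subseteq F^{(r)}$, and $v$ is adjacent to some $u\in S\subseteq F$, so $u$ and $v$ must lie in the same connected component of $G[F]$, forcing $u\in F^{(r)}$ and contradicting the assumption. Therefore we may assume $S\subseteq F^{(r)}$, so every vertex of $S$ gets eliminated during the pruning phase.

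The remaining and more delicate part is a pruning argument. Since $v$ survives pruning, the rule supplies a terminal $u^*$, unmarked at time $j$, such that removing $v$ from the current subgraph $H_v$ (the state just before $v$ is examined) disconnects $u^*$ from $r$. I would trace the $u^*$-to-$r$ path through $v$ in $H_v$ and argue that immediately past $v$ it enters $S$: $v\in\Gamma(S)$, every Steiner vertex of $S$ was bought before iteration $i\le j$ and so is examined only after $v$ in the reverse purchase order, while terminals of $S$ are never directly pruned. Hence all vertices of $S$ are still present in $H_v$ up to cascade deletions, and together with the adjacency $v\in\Gamma(S)$ one expects to pick some $w\in S$ lying on the $u^*$-side of the $v$-cut.

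The key step — and the main obstacle — is turning this into a clean contradiction with $S\cap F'=\emptyset$. Since marks persist over time, $u^*$ being unmarked at time $j$ means it was also unmarked at every earlier purchase time of vertices in $S$. One then argues that at the pruning step for the chosen $w\in S$, the same witness $u^*$ (or a suitable cascade-surviving proxy terminal of $S$, which exists because $S$ active at iteration $i$ carries at least one unmarked terminal) still certifies that removing $w$ disconnects an unmarked-at-$t_w$ terminal from $r$, so the pruning rule keeps $w$ and contradicts $S\cap F'=\emptyset$. The technical difficulty is controlling the interaction between direct pruning of Steiner vertices of $S$ and cascade deletions of terminals of $S$, in order to guarantee that a valid witness persists all the way through the pruning process until $w$'s turn is reached.
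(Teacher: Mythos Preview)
Your preliminary observations are fine: indeed any $v\in\Gamma(S)\cap F'$ is bought at some time $j\ge i$, and from adjacency to $S$ one gets $S\subseteq F^{(r)}$. The trouble begins when you pivot to the survival witness $u^*$ of $v$. There is no reason the $u^*$--to--$r$ path through $v$ should enter $S$: the neighbor of $v$ that lies in $S$ may sit in a component of $H_v\setminus\{v\}$ that is neither the $r$-side nor the $u^*$-side (or it may even lie on the $r$-side if $S$ has a second contact with $F'$). So the step ``pick some $w\in S$ on the $u^*$-side of the $v$-cut'' is unjustified, and with it the plan of reusing $u^*$ as a witness for $w$ collapses. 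You sense this yourself and retreat to a ``proxy terminal of $S$,'' but that is exactly the argument you should have run from the start, and you do not carry it through.

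The paper's proof is much shorter because it never looks outward at $v$ or its witness; it looks inward. Let $U\subseteq S$ be the vertices of $S$ having a neighbor in $F'$, and let $t^*\in S$ be a terminal unmarked before time $i$ (such $t^*$ exists since $S\in A_i$). Every $u\in U$ was bought before iteration $i$; since $U\subseteq S\subseteq V\setminus F'$, all of $U$ must be removed during pruning. But removing all of $U$ severs the only contacts between $S$ and $F'$ and hence disconnects $t^*$ from $r$; since each $u\in U$ has purchase time $<i$, the terminal $t^*$ is unmarked at that time, and the pruning rule forbids such a deletion. That is the whole contradiction. The piece that dissolves your ``technical difficulty'' is the observation that vertices of $U$ are adjacent to $F'$ and therefore can never be cascade-deleted; this anchors $S$ to the root throughout the early part of pruning and makes the witness $t^*$ available when it is needed.
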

\begin{proof}[of Lemma~\ref{lemma-neighbors}]\\
	Note that $S\in A_i$ means that $S$ is active in iteration $i$ and therefore there is an unmarked (before
	time $i$) terminal in $S$.
	Now, assume for a contradiction that $F'\cap \Gamma(S) \neq \emptyset$ and let $U\subseteq S$ be the set of
	vertices having a neighbor in $F'$.
	Note that all vertices in $U$ were bought before iteration $i$ because $S$ is a connected component of the
	vertices bought before iteration $i$ and $U\subseteq S$.
	Since $S$ is not part of $F'$, all the vertices in $U$ must have been deleted in the pruning phase.
	A contradiction, since this would disconnect the unmarked (before time $i$) terminal in $S$. \qed
\end{proof}

Following Lemma~\ref{lemma-neighbors}, we can partition all dual variables into the variables that contributed to
buying the vertices of $F'$ and the dual variables that account for the penalties induced by $F'$.
Let $CC$ be the set of all moats $S\subseteq V\setminus\{r\}$ that include a vertex of $F'$ or have a neighbor in $F'$, \ie, $(S\cup \Gamma(S))\cap F'\neq\emptyset$ and $y_S>0$.
Let $PC$ be the set of all other moats, \ie, sets $S$ with $y_S>0$ but $S\not\in CC$. We will show that
\begin{align*}
	\sum_{v\in F'} w_v \le 3\ \sum_{S\in CC} y_S \qquad \text{and} \qquad \Pi(V\setminus F') = \sum_{S\in PC} y_S
\end{align*}
which yields Theorem~\ref{thm3aprox}.

To show the bound on the connection cost we perform the following thought experiment.
Consider the subgraph $G'$ of $G$ obtained by restricting to vertices from $V'=V\setminus (\cup_{S\in PC} S)$, \ie,
restricted to only the root and vertices in the moats in $CC$ that contribute to the connection costs.
Lemma~\ref{lemma-neighbors} implies that there is no edges between moats in $PC$ and $V'$.
Recall that in each iteration, the algorithm increases \emph{all} active moats. Hence, the run of the algorithm
restricted to $G'$ is exactly the same as running the algorithm directly on $G'$.
Formally, let $(H,y')$ be the primal and dual solution obtained by running the algorithm on $G'$.
Then, $H = F'\cap V'$ and $y' = y|_{S\subseteq V'}$.

Now, we can leverage the analysis of the primal-dual algorithm for Node-weighted Steiner Forest given in~\cite{Moldenhauer}.
Recall that a terminal is a vertex with strictly positive penalty. Let $T$ be the set of terminals that are in any
moat of $CC$. Note that all vertices in $T$ are connected to the root since the moats in $CC$ were not disconnected
in the pruning phase. However, the \emph{execution} of our algorithm on $G'$ is \emph{not} the same as running the
primal-dual algorithm for Steiner Forest on $G'$ with terminal pairs $(r,t)$ ($t\in T$).
This is because our algorithm is allowed to deactivate moats due to the
penalty constraints. But, the \emph{analysis} of an iteration of both algorithms is essentially analog. Intuitively,
deactivating a moat compares to satisfying a demand pair in the Forest problem. The proof of the following lemma only
requires a minor change to the analysis and we therefore defer it to Appendix~\ref{App:AppendixB}.

Note that the crucial point is that we increase the dual variables of \emph{all} active moats. This guarantees that
the algorithm run on input subgraph $G'$ is the same as the run on input $G$ with restricted view on $G'$.
Choosing just a subset of the active moats can break this property since in each iteration we do not know in advance which
moats will be pruned during the pruning phase. Therefore, it is not straight forward to include the advanced
violation oracles from~\cite{FVSplanar2.4} that select only a subset of the active moats for increase.

\begin{lemma}[analog of analysis in~\cite{Moldenhauer}]
\label{lemMainCC}
	Let $F'$ be the output of the algorithm and $A_i$ be the set of active moats before running iteration $i$.
	Then,
	\begin{align*}
		\sum_{S\in A_i \cap CC} | F' \cap \Gamma(S) | \leq 3 | A_i \cap CC |.
	\end{align*}
\end{lemma}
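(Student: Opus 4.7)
The strategy is to run the standard planar ``moat graph'' argument behind the analysis of primal-dual algorithms for node-weighted Steiner Forest in~\cite{Moldenhauer}, with the only adaptation being that the vertices whose degrees we charge are restricted to $A_i \cap CC$. The crucial observation making this adaptation work was spelled out just before the lemma: because all active moats are grown uniformly, the restriction of the algorithm's run to the subgraph $G' = G[V']$ is identical to an actual execution of the algorithm on $G'$. Hence the bound we need on $CC$-moats reduces to the bound that the \cite{Moldenhauer} analysis gives for its active moats when it runs on input~$G'$.

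First I would build an auxiliary planar multigraph $H_i$ by contracting each moat $S \in A_i$ and the root component to a single vertex, and keeping every vertex $v \in F' \setminus \bigcup_{S \in A_i} S$ as its own vertex, with an edge $(v, S)$ whenever $v \in \Gamma(S)$. Since $H_i$ is a minor of the planar graph $G$, it is planar, and by construction
\[
\sum_{S \in A_i \cap CC} |F' \cap \Gamma(S)| \;=\; \sum_{S \in A_i \cap CC} \deg_{H_i}(S).
\]
Then I would clean up $H_i$: by Lemma~\ref{lemma-neighbors}, every $S \in A_i \setminus CC$ with $y_S > 0$ has $(S \cup \Gamma(S)) \cap F' = \emptyset$ and hence is isolated in $H_i$, so it can be removed. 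The same lemma also guarantees that each surviving connector $v \in F'$ has all its moat-neighbours in $CC \cup \{\text{root}\}$ (otherwise a moat in $A_i\setminus CC$ would have the $F'$-neighbour~$v$, contradicting the lemma). After this cleanup the targets are the moats in $A_i \cap CC$ and the non-targets are the root vertex and surviving connectors.

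Next I would use the pruning rule to force every non-target vertex to have degree at least two in the simplified $H_i$. For a surviving connector $v$, the pruning condition says that $v$ separates, from the root, some terminal that was still unmarked at $v$'s purchase time; after the moat contractions this forces $v$ to touch at least two distinct contracted components, \ie\ $\deg_{H_i}(v) \geq 2$. One then applies the planar Euler bound together with this $\ge 2$ lower bound exactly as in~\cite{Moldenhauer} (an amortisation that charges at most $3$ incidences per target) to conclude $\sum_{S \in A_i \cap CC} \deg_{H_i}(S) \leq 3\,|A_i \cap CC|$.

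The main obstacle will be the degree-two step: verifying that the pruning rule really produces the claimed lower bound for every surviving connector. This is subtle because the pruning is applied in reverse purchase order, whereas $H_i$ captures the snapshot at iteration~$i$, so one has to argue that the ``unmarked terminal'' witness at $v$'s purchase time still lives in a contracted component distinct from the root's component at iteration~$i$. This is precisely the point where the reduction to a pure Steiner-Forest execution on $G'$ pays off: once the algorithm is viewed as running on $G'$, the analysis of~\cite{Moldenhauer} (with terminal pairs $(r,t)$ for $t \in T$) applies essentially verbatim, since a deactivation triggered by a tight penalty constraint plays exactly the role of a satisfied demand pair in its argument.
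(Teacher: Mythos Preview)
Your plan is essentially the paper's own argument: restrict to the $CC$-world (using Lemma~\ref{lemma-neighbors} to discard the $PC$ moats), contract moats and the root component in a planar auxiliary graph, and then invoke the Moldenhauer planarity count. Two refinements are worth flagging.

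First, ``degree $\ge 2$'' for surviving connectors is \emph{not} the property that drives the bound. What the paper (and~\cite{Moldenhauer}) actually uses is the stronger cut-vertex property: removing any black vertex disconnects some white vertex from the (blue) root vertex. From this one proves $|B|\le |W|-1$ (via a BFS tree argument), and it is this inequality, together with planarity and a black--black contraction step, that yields $|E'|\le 3(|W|-1)$. A mere degree-$\ge 2$ lower bound combined with Euler for bipartite planar graphs gives nothing useful. You already state the separation property (``$v$ separates, from the root, some terminal unmarked at $v$'s purchase time''), so you have the right ingredient---just don't weaken it to $\deg\ge 2$ when you feed it into the counting lemma.

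Second, your construction of $H_i$ contracts only the active moats and the root component, leaving the inactive moats of iteration~$i$ implicit. The paper handles them explicitly: inactive moats in the $CC$-world are contracted into a neighbouring (black) vertex before the colouring, so that the resulting black vertices still satisfy the cut property. Without this step, vertices of $F'$ that lie inside an inactive moat at time~$i$ become stray black vertices in $H_i$ for which the pruning-based cut argument does not directly apply (their purchase time is before~$i$, but the witness terminal may already be marked by iteration~$i$). Once you add this contraction, your outline matches the paper's Appendix~\ref{App:AppendixB} proof.
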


To conclude the upper bound on the connection costs, note that constraint~\eqref{Constraint1}
is tight for all vertices $v\in F'$. This gives
\begin{align*}
	\sum_{v\in F'}w_v & = \sum_{v\in F'} \sum_{S:v\in\Gamma(S)}y_S
	= \sum_{S\subseteq V\setminus \{r\}}|F' \cap \Gamma(S)|\ y_S
	= \sum_{S\in CC} |F'\cap \Gamma(S)|\ y_S.
\end{align*}
We will show that $\sum_{S\in CC} |F'\cap \Gamma(S)|\ y_S \le 3\ \sum_{S\in CC} y_S$ by induction on the number
of iterations. At the beginning all dual variables are equal to $0$ and the inequality holds. In iteration $i$ we grow each active moat from $A_i\cap CC$ by $\epsilon_i$. This increases the left-hand side by
$\epsilon_i \sum_{S\in A_i\cap CC}|F' \cap \Gamma(S)|$ and the right-hand side by $3 \epsilon_i|A_i\cap CC|$.
Then, Lemma~\ref{lemMainCC} concludes the proof of the bound on the connection costs.

In order to prove the bound on the penalties we employ the following lemma.
\begin{lemma}
	Let $F'$ and $y_S$ be the primal and dual solution constructed by the algorithm.
	The set of vertices $X = V \setminus F'$ not spanned by the final solution can be partitioned into sets $X_1, X_2, \dots X_l$ such that the potential of each set is~$0$, i.e., $P(X_k) = 0$ for each $k$.
\end{lemma}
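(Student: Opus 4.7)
The plan is to take the sets $X_k$ to be the connected components of the induced subgraph $G[V\setminus F']$ and to show that each of them is a tight set, i.e., $P(X_k)=0$. Since dual feasibility (constraint~\eqref{Constraint2}) already gives $P(X_k)\ge 0$, what remains is to establish the lower bound $\sum_{S\subseteq X_k}y_S\ge \Pi(X_k)$.

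The basic building blocks are the \emph{frozen moats}: those moats that became inactive via a penalty-tight event. At the moment such a moat $Z$ freezes, $P(Z)=0$ by definition, and this property is preserved throughout the rest of the algorithm: $y_Z$ stops growing once $Z$ is inactive, and every moat that had become a sub-moat of $Z$ by then is also inactive, so $\sum_{S\subseteq Z}y_S$ stays fixed at $\Pi(Z)$. This persists even if $Z$ is later absorbed into a larger moat.

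For each terminal $u\in V\setminus F'$ I would let $Z_u$ denote the \emph{marking moat} of $u$, the unique frozen moat whose freezing caused $u$ to be marked. Every such $u$ is indeed marked: either $u$ lies in a final frozen moat not absorbed into $F^{(r)}$, or $u$ was pruned by cascade from $F^{(r)}$, in which case the pruning rule forces the pruned vertex on $u$'s path to $r$ to satisfy that every terminal it disconnects (including $u$) was marked at time at most its purchase time. The $Z_u$'s form a sub-family of the laminar family of all moats, and its maximal elements $Z_1,\ldots,Z_m$ are pairwise disjoint, cover every terminal of $V\setminus F'$, and each satisfies $P(Z_j)=0$.

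The crucial step---and the main obstacle---is to show $Z_u\subseteq V\setminus F'$, so that each maximal $Z_j$ sits inside a single component $X_k$. If some $w\in Z_u$ were in $F'$, a contradiction must be derived by analyzing the interplay between $w$'s purchase time, the freezing time of $Z_u$, and the purchase time of the pruned vertex on $u$'s path to $r$: the pruned vertex would have had to be kept in order to preserve connectivity of some unmarked terminal, preventing the cascade that put $u$ into $V\setminus F'$. Once $Z_u\subseteq V\setminus F'$ is established, each component $X_k$ contains a disjoint union of maximal marking moats covering all its terminals, and therefore $\sum_{S\subseteq X_k}y_S\ge \sum_j \Pi(Z_j)=\Pi(X_k)$, giving $P(X_k)\le 0$ and completing the proof.
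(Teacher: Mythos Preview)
Your approach is correct and yields a valid partition, but it differs from the paper's. The paper does not take connected components of $G[V\setminus F']$; instead it partitions $V\setminus F'$ according to the algorithm's execution. Vertices that never reached $F^{(r)}$ are grouped into the final non-root moats (each deactivated by a penalty event, hence potential zero), together with Steiner vertices never bought. For each pruned vertex $v$ bought at time $t$, the set $S$ it disconnects is shown to consist only of terminals already marked at time $t$ and hence to decompose into moats that were inactive at time~$t$, each with potential zero. Your partition is coarser and more uniform: one block per component, and the whole argument reduces to the single containment $Z_u\subseteq V\setminus F'$.

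That containment, which you rightly flag as the crux but only sketch, can be completed cleanly. Suppose $w\in Z_u\cap F'$ and let $v$ (bought at time $t$) be the pruned vertex whose removal deleted $u$; the pruning rule gives $t_u\le t$, and since one vertex is bought per iteration, in fact $t_u<t$. Every vertex of $Z_u$ was bought before iteration $t_u<t$, so when $v$ is being considered in the reverse-order pruning none of $Z_u$ has yet been touched: $Z_u$ lies entirely in the current set $F''$, and $v\notin Z_u$. Now $w\in F'\subseteq F''\setminus\{v\}$ and $F'$ is connected to $r$, so $w$ is connected to $r$ in $F''\setminus\{v\}$; and $u$ is connected to $w$ inside $Z_u\subseteq F''\setminus\{v\}$. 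Thus $u$ remains connected to $r$ after removing $v$, contradicting that $u$ was deleted at this step.

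Both routes rest on the same underlying fact---a frozen moat whose terminals are pruned is pruned in its entirety---so the difference is organizational rather than conceptual. The paper's finer, chronology-based partition makes each $X_k$ an inactive moat outright, so $P(X_k)=0$ is immediate; your component-based partition gives the cleanest possible pieces and also makes the later identity $\sum_k\sum_{S\subseteq X_k}y_S=\sum_{S\in PC}y_S$ transparent, since any connected moat disjoint from $F'$ automatically sits in a single component.
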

\begin{proof}
	Observe that there are two ways for a vertex $v$ to be in $X$: either it was never a part of the root component ($v \in V\setminus F^{(r)}$) or it was deleted in the pruning phase ($v \in F^{(r)}$).
	It is easy to see that $P(V\setminus F^{(r)}) = 0$. Each vertex in $V\setminus F^{(r)}$ was at the end a part of some inactive component not containing the root and hence the potentials of these components were $0$. Or, it was never in any moat.

	It remains to show that the set $S$ of vertices disconnected from $F'$ by pruning a vertex $v$ can be partitioned into sets $X_k$ for which $P(X_k)=0$.
	Let $t$ be the time when $v$ was bought. Observe that every vertex $u$ in the neighborhood $\Gamma(S)$ of $S$ has
	been bought after time $t$ or was not bought at all. Now, $S$ contains only marked terminals at time $t$,
	otherwise $v$ would not have been pruned. Hence, $S$ is a union of inactive moats at time $t$.
	This gives the desired partition. \qed
\end{proof}

Observe that the sets $X_i$ are disjoint from $CC$ and that $PC$ is the set of all $S\subseteq X_i$ with $y_S>0$.
To conclude the bound on the penalties note that since all $X_k$ have zero potential we have
\begin{align*}
	\Pi(V\setminus F') = \sum_{k=1}^l \Pi(X_k) = \sum_{k=1}^l \sum_{S\subseteq X_k} y_S = \sum_{S\in PC} y_S.
\end{align*}

\section{Combination with threshold rounding}
\vspace{-4mm}
	A standard technique to generalize primal-dual algorithms from Steiner Tree problems to their price-collecting variations
is to use \emph{threshold rounding} (see Section 5.7 of~\cite{WilliamsonBook} or~\cite{marketChoice}).
Here, in a first step an LP formulation for the price-collecting version is solved over fractional variables. Then, we pick
a threshold $\alpha$ and consider the vertices that are bought with value at least $\alpha$ to be terminals.
In a second step, the primal-dual algorithm for the original Steiner Tree problem is run on this set of terminals to
obtain the final solution. We note that the resulting algorithm is deterministic because we can try all possible thresholds
(at most one for every vertex). However, the analysis uses a randomization argument.

We observed in~\cite{MoldenhauerThesis} that using threshold rounding in combination with the primal-dual $2.4$-approximation
for Node-weighted Steiner Forest by Berman and Yaroslavtsev~\cite{FVSplanar2.4} yields a $2.93$-approximation for NWPCST on planar graphs.

In this section, we combine the previous LMP algorithm with the threshold rounding technique to gain an improved approximation factor
of~$2.88$. Our approach is inspired by an idea of Goemans~\cite{Combining}. Intuitively, such an improvement is possible because
the LMP approximation improves over the factor of $3$ if the optimal solution induces a high penalty cost.
In contrast, if the penalties are only a small part of the optimal solution's cost, threshold rounding
can leverage the robustness of the underlying $2.4$-approximation. Thus, by combining the two algorithms we can hedge
their weaknesses.

However, there is a technical difficulty. Applying threshold rounding to the LP that was used for the analysis of the LMP 3-approximation
($LP_a$ below) is not straight forward. We circumvent this problem by considering a stronger LP (see $LP_b$ below) that is suitable
for threshold rounding. To link the two different formulations we will guess the cost of the optimal solution to $LP_a$ and restrict
$LP_b$ to have a similar objective value. More precisely, we will solve multiple versions of $LP_b$ (see $LP_b^k$ below) and
then apply threshold rounding to gain a solution.
To obtain the final solution, we simply take the best of all solutions stemming from $LP_b^k$ and the LMP 3-approximation.
We remark that the resulting algorithm is deterministic. However, for the analysis, we will use a randomized argument to combine
the bounds of all solutions and gain an approximation factor of $2.88$.

\subsection{Two Linear Programs}
\vspace{-3mm}
	Consider the LP used in the construction of the primal-dual LMP 3-approximation which we denote by $LP_a$.
	\begin{align*}
	\text{min} & \sum_{v\in V} w_v x_v + \sum_{X \subseteq V\setminus\{r\}} \Pi(X)z_X & (LP_a)\\
	s.t.\\
		& \sum_{v\in \Gamma(S)} x_v + \sum_{X: S \subseteq X} z_X \geq  1			& \forall S \subseteq V\setminus\{r\} \\
		& x_v \geq 0 \hspace{3mm} \forall v \in V \hspace{3mm}	& z_X \geq 0 \hspace{3mm} \forall X \subseteq V\setminus\{r\}
	\end{align*}
	Let further $LP_b$ be the following LP that lends itself to threshold rounding
	\begin{align*}
	\text{min} & \sum_{v\in V} w_v x_v + \sum_{u \in V\setminus\{r\}}\pi_{u}y_u & (LP_b)\\
	s.t.\\
		& \sum_{v\in \Gamma(S)} x_v + y_u \geq  1				& \forall S \subseteq V\setminus\{r\}, \hspace{3mm} u \in S \\
		& x_v \geq 0 \hspace{3mm} \forall v \in V	& y_u \geq 0 \hspace{3mm} \forall u \in V
	\end{align*}
	While we do not know how to solve $LP_a$ we can solve $LP_b$ to optimality using, \eg, the ellipsoid method.
	We remark that the algorithm which will be described in the sequel only requires to solve multiple instances
	of a variation of $LP_b$. $LP_a$ is solely used in the analysis to combine the threshold rounding with the LMP 3-approximation.
	\begin{fact}
	\label{strongerLP}
	$LP_a$ is stronger than $LP_b$, \ie, every feasible solution to $LP_a$ is also feasible to $LP_b$.
	\end{fact}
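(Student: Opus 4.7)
The plan is to exhibit a natural map from feasible solutions of $LP_a$ to feasible solutions of $LP_b$ that preserves the objective. Given a feasible solution $(x,z)$ of $LP_a$, I keep the $x$ variables unchanged and define, for each $u\in V\setminus\{r\}$,
\begin{align*}
y_u \;=\; \sum_{X:\, u\in X} z_X.
\end{align*}
I then need to verify two things: (i) $(x,y)$ is feasible for $LP_b$, and (ii) the $LP_b$ objective at $(x,y)$ equals the $LP_a$ objective at $(x,z)$. Together these imply $\mathrm{OPT}(LP_b)\le\mathrm{OPT}(LP_a)$, which is the usual meaning of "$LP_a$ is stronger than $LP_b$."

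For feasibility, fix any $S\subseteq V\setminus\{r\}$ and any $u\in S$. I would observe that every set $X$ with $S\subseteq X$ certainly contains $u$, so $\{X:S\subseteq X\}\subseteq\{X:u\in X\}$, and since the $z_X$ are nonnegative,
\begin{align*}
y_u \;=\; \sum_{X:\,u\in X} z_X \;\ge\; \sum_{X:\,S\subseteq X} z_X.
\end{align*}
Adding $\sum_{v\in\Gamma(S)} x_v$ to both sides and invoking the $LP_a$ constraint for $S$ gives the $LP_b$ constraint for the pair $(S,u)$. Nonnegativity of $y_u$ is immediate.

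For the objective, I would interchange the order of summation:
\begin{align*}
\sum_{u\in V\setminus\{r\}} \pi_u\, y_u \;=\; \sum_{u\in V\setminus\{r\}} \pi_u \sum_{X:\,u\in X} z_X \;=\; \sum_{X\subseteq V\setminus\{r\}} z_X \sum_{u\in X} \pi_u \;=\; \sum_{X\subseteq V\setminus\{r\}} \Pi(X)\, z_X,
\end{align*}
so the penalty terms agree exactly, and the $\sum_v w_v x_v$ parts are identical by construction. There is no real obstacle here: the only content is picking the right aggregation $y_u=\sum_{X\ni u} z_X$, after which both feasibility and objective-equality fall out from a single containment of index sets and a swap of summations.
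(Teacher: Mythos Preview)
Your proof is correct and essentially identical to the paper's own argument: the paper also sets $y_u=\sum_{X:\,u\in X} z_X$, uses the containment $\{X:S\subseteq X\}\subseteq\{X:u\in X\}$ to verify feasibility, and swaps the order of summation to show the objective values coincide.
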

	\begin{proof}
	Let $(x,z)$ be feasible to $LP_a$. Set $y_u = \sum_{X: u\in X} z_X$. We claim that $(x,y)$ is feasible to $LP_b$.
	Consider any $S\subseteq V\setminus\{r\}$ and $u\in S$. We have
	\begin{align*}
	y_u = \sum_{X: u\in X} z_X \geq \sum_{X: S \subseteq X} z_X
	\end{align*}
	Moreover, the objective values of $(x,z)$ and $(x,y)$ in their respective formulations are equal
	\begin{align*}
	\sum_{u \in V}\pi_{u}y_u = \sum_{u \in V}\pi_{u} \sum_{X: u\in X} z_X = \sum_{X \subseteq V\setminus\{r\}} \sum_{u \in X}\pi_u z_X = \sum_{X \subseteq V\setminus\{r\}} \Pi(X)z_X.
	\end{align*} \qed
	\end{proof}

\subsection{Threshold rounding}
\vspace{-3mm}
We will first describe how to link the two different LP formulations and then apply threshold rounding.
In the sequel, let $(x^*, z^*)$ be the optimum solution to $LP_a$ with objective value $OPT_a$.
Further, if $T$ is a solution to NWPCST, let $w(T)$ be the total connection and $\pi(V\setminus T)$
be the total penalties of $T$. We also use this notation for (fractional) solutions: $w(x)$, $\pi(z)$ and $\pi(y)$.
\subsubsection{Binding the two LPs.}
\vspace{-3mm}
For comparison with the LMP 3-approximation we require a bound on the threshold rounding solution with respect to
$(x^*,z^*)$, the optimal solution to $LP_a$, which requires to link $LP_a$ and $LP_b$. This is done by
guessing the value of $w(x^*)$ and restricting $LP_b$ to find a solution $(x^\#,y^\#)$ with objective function
value close to $w(x^*)$. For given $k$ consider $LP_b^k$ defined as
\begin{align*}
\text{min} & \sum_{v\in V} w_v x_v + \sum_{u \in V\setminus\{r\}}\pi_{u}y_u & (LP_b^k)\\
s.t.\\
	& \sum_{v\in \Gamma(S)} x_v + y_u \geq  1						& \forall S \subseteq V\setminus\{r\}, \hspace{3mm} u \in S \\
	& \sum_{v\in \Gamma(S)} w_v x_v \in \left[(1+\epsilon)^k, (1+\epsilon)^{k+1}\right)	& \forall S \subseteq V\setminus\{r\} \\
	& x_v \geq 0 \hspace{3mm} \forall v \in V & y_u \geq 0 \hspace{3mm} \forall u \in V
\end{align*}
Note that the number of different $k$ which we need to consider is bounded by a polynomial in the size of the input.
Therefore, assume that $k$ is set such that $w(x^*) \in \left[(1+\epsilon)^k, (1+\epsilon)^{k+1}\right)$.
Let further $(x^\#, y^\#)$ be the optimal solution to $LP_b^{k+1}$.
\begin{fact}
	$w(x^\#) \leq (1+\epsilon)^2 w(x^*)$.
\end{fact}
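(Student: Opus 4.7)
The plan is to first exhibit a feasible solution to $LP_b^{k+1}$, so that the optimum $(x^\#, y^\#)$ is well-defined, and then read off the desired bound directly from the upper endpoint of the interval constraint of $LP_b^{k+1}$.

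For feasibility I would take $\tilde x = (1+\epsilon)\, x^*$ together with $\tilde y_u = \sum_{X:\, u\in X} z^*_X$, combining a rescaling of the primal $x$-variables with the aggregation used in the proof of Fact~\ref{strongerLP}. The covering constraint $\sum_{v\in \Gamma(S)} \tilde x_v + \tilde y_u \geq 1$ then follows exactly as in the proof of Fact~\ref{strongerLP}, because multiplying $x^*$ by $(1+\epsilon) \geq 1$ only increases the left-hand side. For the new interval constraint, the assumption $w(x^*) \in \bigl[(1+\epsilon)^k, (1+\epsilon)^{k+1}\bigr)$ implies $w(\tilde x) = (1+\epsilon)\, w(x^*) \in \bigl[(1+\epsilon)^{k+1}, (1+\epsilon)^{k+2}\bigr)$, so $(\tilde x, \tilde y)$ indeed lies in the feasible region of $LP_b^{k+1}$.

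With feasibility established, the optimum $(x^\#, y^\#)$ itself satisfies the interval constraint, so in particular $w(x^\#) < (1+\epsilon)^{k+2}$. Combining this with $w(x^*) \geq (1+\epsilon)^k$ gives $w(x^\#) < (1+\epsilon)^{k+2} = (1+\epsilon)^2 (1+\epsilon)^k \leq (1+\epsilon)^2\, w(x^*)$, which is the claim.

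There is no real obstacle here: once the candidate $(\tilde x, \tilde y)$ is in hand, the bound is forced by the upper endpoint of the interval constraint. The only mild points to verify are that scaling $x^*$ upward by $(1+\epsilon)$ is sufficient to land in the shifted interval, and that the $\tilde y$ supplied by Fact~\ref{strongerLP} remains valid after the rescaling of $x$; both are immediate from monotonicity of the covering constraint in $x$.
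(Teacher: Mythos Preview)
Your argument is correct and matches the paper's intent: the paper states this fact without proof, treating it as immediate from the interval constraint of $LP_b^{k+1}$ together with $w(x^*)\ge (1+\epsilon)^k$, which is exactly what you derive. Your extra step of exhibiting $(\tilde x,\tilde y)=\bigl((1+\epsilon)x^*,\,y^*\bigr)$ to ensure that $LP_b^{k+1}$ is nonempty (so that $(x^\#,y^\#)$ exists) is a welcome addition the paper omits; indeed the paper only constructs such a witness later, in the proof of the next fact, via the rescaling $s'=(c\,x^*,y^*)$.
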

\begin{fact}
	$\pi(y^\#) \leq \pi(z^*)$.
\end{fact}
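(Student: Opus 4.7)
The plan is to exhibit a feasible solution $(x^\dagger, y^\dagger)$ to $LP_b^{k+1}$ whose penalty equals $\pi(z^*)$ and whose connection cost sits exactly at the lower endpoint $(1+\epsilon)^{k+1}$ of the weight interval. Then the optimality of $(x^\#, y^\#)$, combined with the lower bound $w(x^\#)\ge (1+\epsilon)^{k+1}$ forced by the range constraint of $LP_b^{k+1}$, will immediately pin down $\pi(y^\#)\le \pi(z^*)$.

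For the construction I reuse the transformation from Fact~\ref{strongerLP}. Concretely, set $\alpha := (1+\epsilon)^{k+1}/w(x^*)$; since $w(x^*)\in[(1+\epsilon)^k,(1+\epsilon)^{k+1})$ we have $\alpha> 1$. Define $x^\dagger := \alpha\, x^*$ and $y^\dagger_u := \sum_{X:\,u\in X} z^*_X$. By construction $w(x^\dagger) = (1+\epsilon)^{k+1}$, which lies in the required range. The connectivity inequality $\sum_{v\in\Gamma(S)} x^\dagger_v + y^\dagger_u \ge 1$ will hold for every $S$ and every $u\in S$ by the same reduction used in Fact~\ref{strongerLP}: scaling $x^*$ by $\alpha\ge 1$ only strengthens the left-hand side, while $y^\dagger_u \ge \sum_{X:\,S\subseteq X} z^*_X$ together with $LP_a$-feasibility of $(x^*,z^*)$ closes the gap. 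The penalty identity $\pi(y^\dagger)=\pi(z^*)$ is precisely the calculation already done in Fact~\ref{strongerLP}. The degenerate case $w(x^*)=0$ can be dispatched separately, since then $z^*$ alone satisfies every connectivity constraint and the claim is trivial.

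The conclusion is then immediate. Optimality of $(x^\#, y^\#)$ in $LP_b^{k+1}$ gives $w(x^\#) + \pi(y^\#) \le w(x^\dagger) + \pi(y^\dagger) = (1+\epsilon)^{k+1} + \pi(z^*)$, while feasibility of $(x^\#, y^\#)$ forces $w(x^\#) \ge (1+\epsilon)^{k+1}$; subtracting yields $\pi(y^\#)\le \pi(z^*)$.

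The only delicate point, more a design choice than a real obstacle, is pinning the scaling so that $w(x^\dagger)$ lands exactly at the \emph{lower} endpoint of the interval. This leaves zero slack in the connection term, which is precisely what allows the feasibility-induced lower bound on $w(x^\#)$ to cancel the $(1+\epsilon)^{k+1}$ baseline and leave behind a clean comparison of penalties.
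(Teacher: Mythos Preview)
Your proof is correct and follows essentially the same approach as the paper: both construct a feasible solution to $LP_b^{k+1}$ by scaling $x^*$ upward and pairing it with the $y$-vector from Fact~\ref{strongerLP}, then invoke optimality of $(x^\#,y^\#)$. The only cosmetic difference is the choice of scaling factor: the paper picks $c$ so that $c\cdot w(x^*)=w(x^\#)$, which makes the connection terms cancel directly, whereas you scale to the lower endpoint $(1+\epsilon)^{k+1}$ and then use the range constraint $w(x^\#)\ge (1+\epsilon)^{k+1}$ to achieve the same cancellation.
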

\begin{proof}
	Consider the feasible solution $s = (x^*,y^*)$ to $LP_b$ which is derived from $(x^*, z^*)$ using the construction from Fact~\ref{strongerLP}. Due to this construction we have that $\pi(y^*) = \pi(z^*)$. Let $c$ be such that $c \cdot w(x^*) = w(x^\#)$. Now consider $s' = (c \cdot x^*,y^*)$ which is feasible to $LP_b$ since $c>1$. Moreover, $s'$ is also feasible to $LP_b^{k+1}$. Thus $w(x^\#)+\pi(y^\#) \leq w(c \cdot x^*)+\pi(y^*) = w(x^\#)+\pi(y^*)$. This concludes the proof. \qed
	\end{proof}

\subsubsection{Threshold rounding.}
\vspace{-3mm}
We use the standard threshold rounding technique~(cf. \cite{WilliamsonBook}). Let $\beta\in (0,1)$ be a constant to be determined later.
For every possible value $\alpha$ of $y^\#$ that is at most $\beta$, let
$Q = \{ u : y^\#_u \leq \alpha \}$. Consider the instance $I_{NWST_Q}$ of the $NWST$ problem which is derived
from $I_{NWPCST}$ by keeping only terminals from~$Q$. Let $LP_{NWST_Q}$ be the following linear program
\begin{align*}
\text{min} & \sum_{v\in V} w_v x_v & (LP_{NWST_Q})\\
	& \sum_{v\in \Gamma(S)} x_v \geq  1			& \forall S \subseteq V\setminus\{r\}, \hspace{3mm} Q \cap S \neq \emptyset \\
	& x_v \geq 0 								& \forall v \in V
\end{align*}
Let $OPT_{LP_Q}$ be the optimum objective function value of $LP_{NWST_Q}$.
We run the $2.4$-approximation algorithm for $I_{NWST_Q}$ by Berman and Yaroslavtsev~\cite{FVSplanar2.4} which
returns a solution $F$ such that its cost is no greater than $2.4 \cdot OPT_{LP_Q}$.
Finally, return the best of all obtained solutions $F$ (due to different values of $\alpha$).

Though the algorithm is deterministic its analysis is based on a randomized argument.
Instead of trying all possible values of $\alpha$, consider $\alpha$ to be chosen uniformly at random from $[0, \beta]$.
Consider $x' = \frac{1}{1-\alpha} x^\#$. It follows that $x'$ is a feasible solution to $LP_{NWST_Q}$.
We bound the expected connection and penalty costs of $F$.
\begin{align*}
	\mathbb{E}\left[\sum_{v \in F} w_v\right]
	&\leq
	\mathbb{E}\left[2.4 \cdot OPT_{LP_Q}\right] 
	\leq
	\mathbb{E}\left[2.4 \sum_{v \in V} x_v' \cdot w_v\right]
	\leq
	\mathbb{E}\left[\frac{2.4}{1-\alpha}\right] \sum_{v \in V}x^\#_v \cdot w_v \\
	&=
	\left( \int_0^{\beta} \frac{1}{\beta} \cdot \frac{2.4}{1-\alpha} d\alpha \right) w(x^\#) = \frac{2.4}{\beta} \ln\left(\frac{1}{1-\beta}\right) w(x^\#)\\
	&\leq
	\frac{2.4}{\beta} \ln\left(\frac{1}{1-\beta}\right) (1+\epsilon)^2 w(x^*)
\end{align*}
\begin{align*}
	\mathbb{E}\left[\sum_{u \notin Q} \pi_u \right]
	& =
	\mathbb{E}\left[\sum_{u : y^\#_u > \alpha} \pi_u \right]
	\leq
	\sum_u \pi_u Pr\left[ y^\#_u \geq \alpha \right]
	\leq
	\sum_u \pi_u \int_0^{y^\#_u} \frac{1}{\beta} d \alpha \\
	& =
	\sum_u \pi_u \frac{1}{\beta} y^\#_u = \frac{1}{\beta} \pi(y^\#) \leq \frac{1}{\beta} \pi(z^*)
\end{align*}

\subsection{Combining the two algorithms}
\vspace{-2mm}
To combine the LMP approximation with threshold rounding we require a slight modification of the instance
submitted to the LMP approximation.

Recall that for an instance $I$ the LMP 3-approximation returns a solution $T$ such that
$w(T) + 3 \pi(V\setminus T) \leq 3 OPT_a$.
Consider now instance $I'$ with has its penalties scaled by $1/3$, \ie, $\pi'_v = \frac{1}{3} \pi_v$.
Run the LMP approximation on $I'$ to obtained solution $T'$ satisfying
$w(T') + \pi(V\setminus T') = w(T') + 3 \pi'(V\setminus T') \leq 3OPT'_a$,
where $OPT'_a$ is the value of the optimum solution to program $LP'_a$ derived from $LP_a$ by taking scaled penalties $\pi'$.
Observe that $(x^*, z^*)$ is also feasible to $LP'_a$, because this program differs only in the objective function. Hence we have that
\begin{align*}
	w(T') + \pi(V\setminus T') & \leq 3OPT'_a \leq 3\left(w(x^*) + \pi'(z^*)\right) = 3 w(x^*) + \pi(z^*)
\end{align*}
Now, our final algorithm returns the best solution among $T'$ and the solution produced by the threshold rounding technique
in the previous section. Note that this is a deterministic procedure. However, the analysis uses a randomized argument inspired
by Goemans~\cite{Combining}: pick one solution with probability $p$ and the other with probability $1-p$.
Let $SOL$ be the returned solution.
\begin{align*}
	\mathbb{E}\left[SOL\right]
	& \leq
	\left[3p + (1-p) \frac{2.4}{\beta} \ln\left(\frac{1}{1-\beta}\right) (1+\epsilon)^2 \right] w(x^*) + \left[p + (1-p)\frac{1}{\beta}\right]\pi(z^*) \\
	& \leq
	(1+\epsilon)^2 \left[\left(3p{+}(1{-}p) \frac{2.4}{\beta} \ln\left(\frac{1}{1{-}\beta}\right) \right) w(x^*){+}\left(p{+}(1{-}p)\frac{1}{\beta}\right) \pi(z^*)\right]
\end{align*}
Finally, optimizing constants we obtain for $\beta = 1 - e^{-\frac{5}{36}}$ and $p = \frac{1}{4-3e^{-5/36}}$ the claimed result
\begin{align*}
	\mathbb{E}\left[SOL\right]
	& \leq
	\frac{4}{4-3e^{-5/36}} (1+\epsilon)^2 \left(w(x^*) + \pi(z^*)\right) \\
	& \leq
	\frac{4}{4-3e^{-5/36}} (1+\epsilon)^2 OPT \approx (2.8797 + \epsilon') \cdot OPT
\end{align*}

\appendix

\section{Adapted proof from~\cite{Moldenhauer}} \label{App:AppendixB}

We outline the proof of Lemma~\ref{lemMainCC}. As indicated this proof is, except for a minor change, analog
to the proof used in~\cite{Moldenhauer} to show that the generic primal-dual algorithm for Node-weighted Steiner Forest
on planar graphs has an approximation guarantee of~3.

Let $F'$ be the output of the algorithm and $A_i$ be the set of active moats before running iteration $i$.
We want to show that
\begin{align}
	\sum_{S\in A_i \cap CC} | F' \cap \Gamma(S) | \leq 3 | A_i \cap CC |. \label{cceq}
\end{align}

Within the rest of the proof, restrict to the induced subgraph of the union of all moats in $CC$ and $r$, \ie, discard
from $G$ every vertex that is not the root and not in any moat of $CC$.
Then, in~\eqref{cceq} we count the adjacencies between active moats at iteration $i$ and vertices from $F'$.
Let $F_i$ be the set of vertices bought by the algorithm before iteration $i$.
Consider a graph $G'$ obtained from $G$ in the following way:
\begin{enumerate}
	\item take the subgraph of $G$ induced by vertices from $F_i \cup F'$
	\item contract each inactive moat (at iteration $i$) in this subgraph with a neighboring vertex (excluding the moat containing root)
	\item contract each active moat in this component
	\item contract the moat containing the root
\end{enumerate}
Next, color the vertices of $G'$ with three colors:
\begin{itemize}
	\item white color for vertices obtained from contracting active moats
	\item blue color for the single vertex representing the moat containing the root
	\item black color for all other vertices, i.e. $F' \setminus F_i$
\end{itemize}
Observe now that deleting a black vertex in $G'$ disconnects some white vertex from the blue vertex,
because otherwise it would be deleted in the pruning phase.
$G'$ remains planar, since deletions and contractions preserve planarity. Moreover, it is easy to see
that the number of adjacencies $\sum_{S\in{A_i}}|F' \cap \Gamma(S)|$ in $G$ is the same as the number of
edges between white and black vertices in $G'$.

To bound this number we will use the following result that is implicit in~\cite{Moldenhauer}.
\begin{lemma}
	\label{lemma-graph-white-black}
	Consider a simple connected planar graph $H=(V,E)$ in which vertices are colored with two colors: black and white, i.e. $V=B\cup W$. If for this graph the two following conditions hold
	\begin{itemize}
		\item{there is no edge between any two white vertices}
		\item{removing any black vertex disconnects the graph}
	\end{itemize}
	then the number of edges between black and white vertices ($|E'|$) is at most 3 times greater then the number of white vertices, i.e., $|E'| \leq 3 (|W|-1)$
\end{lemma}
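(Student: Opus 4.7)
The plan is to fix a planar embedding of $H$ and combine Euler's formula with both hypotheses via a face-counting argument. Setting $n=|V|$, $m=|E|$, and $f$ the number of faces, Euler gives $n-m+f=2$ and $\sum_F |F|=2m$. For each face $F$, let $w(F)$ and $b(F)$ count the white and black corners on its boundary walk; then $\sum_F w(F)=\sum_{w\in W} d_w=|E'|$, since each $E'$-edge contributes exactly one corner at its unique white endpoint (because $W$ is an independent set). That same independence yields $b(F)\geq w(F)$ on every face, hence $|F|\geq 2w(F)$.

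Next I would exploit the cut-vertex hypothesis through the block-cut decomposition. The key structural observation is that every leaf block of the block-cut tree must be a bridge $bw$ with $w$ a pendant white: a 2-connected leaf block would have a single shared cut-vertex black, but 2-connectivity would force each interior white to have degree at least $2$ within the block, which is impossible with only one black present (whites having no white neighbors). Consequently, whenever $H$ contains any black vertex, it admits a pendant-white leaf we can peel.

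I would then prove the bound by induction on $|W|$. The base case $|W|\leq 1$ forces $|E'|=0$ directly. For the inductive step, peel a pendant white $w$ attached to $b$. If $b$ remains a cut vertex of $H-w$, the inductive hypothesis yields $|E'(H)|\leq |E'(H-w)|+1\leq 3(|W|-2)+1 < 3(|W|-1)$. The main obstacle is the case in which $b$ loses cut-vertex status in $H-w$, making the hypothesis fail for $H-w$; here one peels $\{w,b\}$ together and continues along a ``pendant chain'' of blacks whose cut-vertex status depended on the already-removed pendant structure, carefully charging the deleted $E'$-edges against the budget of $3$ per peeled white. This works because the chain attaches to the rest of $H$ at a single further block, and each interior chain black's white-neighbors outside the chain are absorbed into later peelings as they themselves become pendant. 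An alternative route is to construct an auxiliary planar graph on $W\cup\{\ast\}$ by routing the $E'$-incidences of $H$ through its faces and then apply the planar simple-graph bound $3(|W|+1)-6=3(|W|-1)$ directly; the delicate issue there is arranging the construction to avoid parallel edges.
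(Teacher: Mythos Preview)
Your structural observation—that under the two hypotheses every leaf block of the block--cut tree is a bridge $bw$ with $w$ a pendant white—is correct. But the induction breaks exactly where you yourself flag it as delicate, and the sketch does not close the gap. When $b$ ceases to be a cut vertex in $H-w$ and you peel $\{w,b\}$, you delete from $E'$ every edge from $b$ to a white vertex, and nothing in the hypotheses bounds $|N(b)\cap W|$ by~$3$; one removed white gives a budget of~$3$, which need not suffice. Your claim that $b$'s remaining white neighbours ``are absorbed into later peelings as they themselves become pendant'' is not true in general: take $V=\{w,b,w_1,w_2,w_3,b',w'\}$ with $b$ adjacent to $w,w_1,w_2,w_3$ and $b'$ adjacent to $w_1,w_2,w_3,w'$. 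Both blacks are cut vertices, the graph is planar, and peeling $\{w,b\}$ removes four $E'$-edges against a budget of three, while $w_1,w_2,w_3$ are \emph{not} pendant afterwards. The ``chain of blacks'' you describe only adds further black removals, each of which may again delete many $E'$-edges without contributing any new whites to the budget. The opening Euler/face-counting paragraph is also left hanging: summing $|F|\ge 2w(F)$ over faces only recovers $m\ge |E'|$, which is trivial; and the alternative auxiliary-graph route is not carried out.

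The paper's argument avoids peeling altogether. It repeatedly processes each black--black edge $uv$: delete it if $u$ and $v$ share a common white neighbour, otherwise contract $u$ into $v$. One checks that this preserves planarity, simplicity, connectivity, both hypotheses, and the count $|E'|$, and terminates in a \emph{bipartite} simple planar graph $H'$. A BFS tree of $H'$ rooted at any white vertex then gives $|B|\le |W|-1$: all leaves are white (blacks are cut vertices), and since $H'$ is bipartite every black has a distinct white child. The remaining Euler-type bound on $|E'|$ in $H'$ is not reproduced here but quoted from~\cite{MoldenhauerThesis}. The contraction-to-bipartite step is the idea your argument lacks: it controls $|B|$ in terms of $|W|$ globally, so that a single planarity bound on $H'$ suffices instead of a vertex-by-vertex charging scheme.
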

Before we prove the lemma, let us remark how it yields the claim.
Consider for a moment the color of the blue vertex in $G'$ to be white (resulting in graph $H$).
Now removing a black vertex clearly splits the graph into multiple components, since it disconnects
at least two white vertices (one of them is this recolored blue vertex). All other conditions of the lemma are satisfied.
Applying Lemma~\ref{lemma-graph-white-black} finishes the proof, since $|A_i| = |W|-1$.

\begin{proof} [Proof of the Lemma~\ref{lemma-graph-white-black}]
	We follow the proof given in \cite{MoldenhauerThesis}.\\
	Consider the following operation on the graph $H$.\\
	Take any edge $e=(u,v)$ between two black vertices $u$ and $v$ in $H$.
	\begin{itemize}
		\item{If $u$ and $v$ share a common white neighbor, then delete edge $e$.}
		\item{Otherwise contract $u$ and $v$.}
	\end{itemize}
	Observe that this operation preserves conditions of the lemma. Moreover it does not change the number
	of adjacencies between black and white vertices. Consider now the graph $H'$ obtained by performing as many
	above operations as possible. The $H'$ is bipartite since we contracted or deleted all edges between any
	two black vertices. The goal is now to bound the number of edges in $H'$.
	The idea is to use the Euler's formula for planar graphs. But first we have to show a few claims about $H'$.

	Let $W$ and $B$ denote the set of white and black vertices of $H'$, respectively.
	\begin{fact}
		$|B| \leq |W|-1.$ \label{substitutefact}
	\end{fact}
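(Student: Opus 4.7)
My plan is to prove Fact~\ref{substitutefact} by strong induction on $|B|$. Intuitively, the hypothesis that every black vertex is a cut vertex forces $H'$ to be ``tree-like'', and the induction makes this precise by peeling off one black vertex at a time.

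For the base case $|B|=0$, connectedness of $H'$ together with the absence of white-white edges forces $H'$ to consist of a single isolated white vertex, so $|W|=1$ and the bound $|B|=0\leq 0 = |W|-1$ holds with equality.

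For the inductive step, pick any black vertex $b\in B$. By the cut-vertex hypothesis, $H'\setminus\{b\}$ has $k\geq 2$ connected components $C_1,\dots,C_k$, each of which I would view as an induced subgraph of $H'$. Planarity, simplicity, connectedness, and the no-white-white-edge property are inherited immediately by every $C_i$. Moreover each $C_i$ contains at least one white vertex: otherwise bipartiteness would make $C_i$ a single isolated black vertex $b'$ whose only neighbor in $H'$ is $b$, so $b'$ would be a leaf of $H'$, contradicting that $b'$ is a cut vertex.

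The main obstacle is verifying that every black $b'\in C_i$ remains a cut vertex of the induced subgraph $C_i$. Here I would exploit the bipartite structure: since $b$'s neighbors in $H'$ are all white, deleting a black $b'\neq b$ from $H'$ cannot remove any neighbor of $b$, so $b$ stays adjacent to at least one vertex in every $C_j$. Consequently the disconnection witnessed by removing $b'$ from $H'$ must occur entirely within the unique $C_i$ containing $b'$: some vertex $v\in C_i$ cannot reach $b$ in $H'\setminus\{b'\}$, and hence cannot reach any white neighbor of $b$ inside $C_i\setminus\{b'\}$, so $C_i\setminus\{b'\}$ is disconnected. Applying the inductive hypothesis to each $C_i$ and summing yields
\begin{align*}
|B| \;=\; 1 + \sum_{i=1}^{k} |B(C_i)| \;\leq\; 1 + \sum_{i=1}^{k}\bigl(|W(C_i)|-1\bigr) \;=\; 1 + |W| - k \;\leq\; |W|-1,
\end{align*}
where the final inequality uses $k\geq 2$, closing the induction.
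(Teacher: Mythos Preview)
Your induction is correct. The key step---showing that every black $b'\in C_i$ remains a cut vertex of $C_i$---is sound: because $b$ has only white neighbours, removing the black $b'$ leaves $b$ adjacent to a white vertex in each $C_j$, so the whole of $\{b\}\cup\bigcup_{j\neq i}C_j$ lies in a single component of $H'\setminus\{b'\}$; the disconnection forced by $b'$ being a cut vertex of $H'$ must therefore be witnessed inside $C_i$, and your argument makes this precise. The base case and the arithmetic at the end are also fine (and, like the paper's argument, you never actually need planarity for this fact).

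The paper takes a different and shorter route. It picks any white vertex $r_w$, builds a BFS tree $T$ of $H'$ rooted at $r_w$, and observes that no black vertex can be a leaf of $T$ (a leaf is never a cut vertex). Since $H'$ is bipartite, every black vertex of $T$ therefore has at least one white child, giving an injection $B\hookrightarrow W\setminus\{r_w\}$ and hence $|B|\le |W|-1$ in one stroke. Your inductive peeling argument is more structural and makes the ``tree-like'' intuition explicit, but at the cost of having to re-verify the cut-vertex property in each component; the BFS-tree proof sidesteps that verification entirely.
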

	\begin{proof}
		Consider a bread-first search tree $T$ in $H'$ rooted at any white vertex $r_w$. Since removing a black vertex splits the graph, all leaves of $T$ are white. Recall that $H'$ is bipartite. Thus each black vertex has at least one unique white child in $T$. Furthermore, $r_w$ is the only white vertex that does not have a parent. This concludes the fact.
	\end{proof}
	Now, using Fact~\ref{substitutefact} instead of Claim~1.4 of~\cite{MoldenhauerThesis} in the proof of Lemma~1.3 of~\cite{MoldenhauerThesis}
	yields the result.
\end{proof}

\section{The primal-dual 4-approximation for forest} \label{App:AppendixForest}
	In this section we use a general combinatorial approach for solving prize-collecting problems introduced by Hajiaghayi and Jain \cite{Hajiaghayi}. In their work they obtained the primal-dual $3$-approximation algorithm for edge-weighted Prize-collecting Steiner Forest problem. We repeat their argumentation in planar node-weighted setting resulting in the $4$-approximation algorithm. 

Consider a graph $G=(V,E)$ with a non-negative cost function on nodes $w : V \rightarrow Q_+$, a set of pairs of vertices (demands) $D = {(s_1,t_1), (s_2,t_2),\dots, (s_k,t_k)}$ and a non-negative penalty function $\pi : D \rightarrow Q_+$. In the Node-weighted Prize-collecting Steiner Forest problem we are asked to find a set of vertices $F\subseteq V$ which minimizes the sum of costs of vertices in $F$ plus penalties for pairs of vertices which are not connected in a subgraph of $G$ induced by $F$.

Note that we can give an equivalent definition of demands and penalties by specifying penalties for each unordered pair of vertices. Simply set penalties for pairs of vertices which are not in $D$ to $0$. From now on we will use values $\pi_{ij}$ to denote penalties. Let also $\Gamma(S)$ denote the set of vertices in $V-S$ incident to vertices from $S \subseteq V$ and let $S \odot (i,j)$ means that $|(i,j)\cap S| = 1$ (i.e., $S$ separates vertices $i$ and $j$)
Using this notation, we can formulate our problem with the following integer program
\begin{align*}
\text{min} & \sum_{v\in V} w_v x_v + \sum_{(i,j)\in V\times V}\pi_{ij}z_{ij} & (IP_{SF})\\
s.t.\\
& \sum_{v\in \Gamma(S)} x_v + z_{i,j} \geq  1			& \forall S \subseteq V, \hspace{3mm} \forall (i,j) \in V \times V : \hspace{3mm} S \odot (i,j) \\
& x_v \in \{0,1\}										& \forall v \in V\\
& z_{i,j} \in \{0,1\}									& \forall (i,j) \in S \times S
\end{align*}
Setting $x_v = 1$ corresponds to buying a vertex $v$ (including $v$ into solution $F$) and setting $z_{i,j} = 1$ corresponds to paying a penalty instead of connecting vertices $i$ and $j$.

The dual of the linear relaxation of this program is:
\begin{align*}
\text{max} & \sum_{S \subseteq V, S\odot (i,j)} y_{S_{ij}} & (DLP_{SF} 1)\\
s.t.\\
& \sum_{S:v\in\Gamma(S), S\odot(i,j)} y_{S_{ij}} \leq w_v	& \forall v \in V\\
& \sum_{S:S\odot(i,j)} y_{S_{ij}} \leq \pi_{i,j}			& \forall (i,j) \in V \times V\\
& y_{S_{ij}} \geq 0											& \forall S \subseteq V, S \odot (i,j)\\
\end{align*}
The problem with this dual program is that it has many different variables for each pair of vertices. Hence in a moat growing approach we have to decide how to split the growth of a moat corresponding to a set $S$ between variables $y_{S_{ij}}$. It seems to be a difficult task (see \cite{Hajiaghayi} for a detailed discussion) and may require decreasing some dual variables throughout the course of the algorithm.

Fortunately Hajiaghayi and Jain in \cite{Hajiaghayi} proposed a general approach of handling this issue of different variables induced by prize-collecting setting by circumventing it using Farkas' Lemma. The following arguments are repetitions of their work in the node-weighted setting and we conduct them for the sake of the completeness.

First, we create new variables $y_S = \sum_{(i,j): S\odot(i,j)} y_{S_{i,j}}$. Now the dual $DLP_{SF} 1$ becomes
\begin{align*}
\text{max} & \sum_{S \subseteq V} y_S & (DLP_{SF} 2)\\
s.t.\\
& y_S \leq \sum_{(i,j): S\odot(i,j)} y_{S_{i,j}}			& \forall S \subseteq V\\
& \sum_{S:v\in\Gamma(S)} y_S \leq w_v						& \forall v \in V\\
& \sum_{S:S\odot(i,j)} y_{S_{ij}} \leq \pi_{i,j}			& \forall (i,j) \in V \times V\\
& y_{S_{ij}} \geq 0											& \forall S \subseteq V, S \odot (i,j)\\
& y_S \geq 0												& \forall S \subseteq V\\
\end{align*}
\begin{fact}
Linear programs $DLP_{SF} 1$ and $DLP_{SF} 2$ are equivalent
\end{fact}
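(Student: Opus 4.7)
The plan is to prove equivalence by constructing objective-preserving maps between feasible solutions in both directions.

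First I would handle the easy direction: given a feasible $(y_{S_{ij}})$ for $DLP_{SF} 1$, define $y_S := \sum_{(i,j): S\odot(i,j)} y_{S_{ij}}$ and verify feasibility for $DLP_{SF} 2$. The first $DLP_{SF} 2$ constraint holds with equality by construction, and the node-capacity inequality follows by swapping the order of summation:
$$\sum_{S:\, v\in\Gamma(S)} y_S \;=\; \sum_{\{S,(i,j)\}:\, v\in\Gamma(S),\, S\odot(i,j)} y_{S_{ij}} \;\leq\; w_v.$$
The penalty and non-negativity constraints carry over unchanged, and the two objective values coincide since $\sum_S y_S = \sum_{S,(i,j):\, S\odot(i,j)} y_{S_{ij}}$.

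For the converse, take any feasible $(y_S, y_{S_{ij}})$ of $DLP_{SF} 2$. The subtlety here is that the first constraint $y_S \leq \sum_{(i,j)} y_{S_{ij}}$ may be slack, so simply discarding the $y_S$ does not yield feasibility in $DLP_{SF} 1$: the node-capacity bound would be tested against the possibly larger quantity $\sum_{(i,j)} y_{S_{ij}}$ in place of $y_S$. I would fix this by a per-set rescaling: for each $S$ with $\sum_{(i,j):\, S\odot(i,j)} y_{S_{ij}} > 0$, multiply every $y_{S_{ij}}$ (with $S$ fixed) by the factor $y_S / \sum_{(i,j):\, S\odot(i,j)} y_{S_{ij}} \in [0,1]$. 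After this operation $\sum_{(i,j):\, S\odot(i,j)} y_{S_{ij}} = y_S$ for every $S$, the penalty constraints still hold because their left-hand sides only decreased, and the first $DLP_{SF} 2$ constraint is now tight. Dropping the (now redundant) $y_S$ variables, the node-capacity constraint $\sum_{S:\, v\in\Gamma(S)} y_S \leq w_v$ translates into exactly the $DLP_{SF} 1$ node constraint by the same double-sum manipulation used above. The objective changes from $\sum_S y_S$ to $\sum_{S,(i,j)} y_{S_{ij}}$, and by construction these are equal. The vacuous case $\sum_{(i,j)} y_{S_{ij}} = 0$ forces $y_S = 0$ through the first constraint and requires no rescaling.

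Combining the two directions yields equality of the optimal values and hence the claimed equivalence. I do not expect a real obstacle: the argument is essentially careful bookkeeping of how the pair-indexed variables aggregate into the set-indexed ones, and the only non-routine step is the rescaling idea in the reverse direction (an alternative route via Farkas' lemma, as hinted by the surrounding text, would arrive at the same conclusion).
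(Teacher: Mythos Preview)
Your proof is correct and follows essentially the same approach as the paper: both directions match, and in the reverse direction your proportional rescaling is just a specific way of carrying out the paper's instruction to ``decrease $y_{S_{ij}}$ until $y_S = \sum_{(i,j): S\odot(i,j)} y_{S_{ij}}$''. Your write-up is in fact more detailed in verifying the node-capacity constraint via the double-sum manipulation, which the paper leaves implicit.
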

\begin{proof}
Take a feasible solution $y$ to $DLP_{SF} 1$. Let $y_S = \sum_{(i,j): S\odot(i,j)} y_{S_{i,j}}$. This together with $y$ constitutes a feasible solution to $DLP_{SF} 2$ of the same cost.\\
To see the other direction, take a feasible solution $y$ to $DLP_{SF} 2$. We can assume that the first constraint in $DLP_{SF} 2$ is tight, because otherwise we could decrease $y_{S{i,j}}$ until $y_S = \sum_{(i,j): S\odot(i,j)} y_{S_{i,j}}$ without affecting the objective function and not violating other constraints. The $y_{S{i,j}}$ is feasible to $DLP_{SF} 1$ and the objective function is the same. \qed
\end{proof}

Now we will use Farkas' Lemma to get a rid of dual variables $y_{S_{i,j}}$. Observe that they are not included in the objective function of $DLP_{SF} 2$. The idea is to replace constraints involving $y_{S_{i,j}}$ with different inequalities which for fixed $y_S$ check whether feasible $y_{S_{i,j}}$ exists.

\begin{fact} Farkas' lemma (variant)\\
	\label{FarkasLemma}
	Consider a matrix $A \in R^{m \times n}$ and a vector $b \in R^m$. The system $Ax\leq b$ has a solution $x \geq 0$, if and only if for all $y\geq 0$ with $yA \geq 0$ one has $yb \geq 0$.
\end{fact}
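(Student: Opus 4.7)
The plan is to prove the biconditional by treating the two directions separately. First I would dispatch the easy forward implication: assuming $x^{*}\geq 0$ satisfies $Ax^{*}\leq b$ and $y\geq 0$ satisfies $yA\geq 0$, the chain $0\leq (yA)x^{*}=y(Ax^{*})\leq yb$ uses $yA\geq 0$ together with $x^{*}\geq 0$ for the first inequality and $y\geq 0$ together with $Ax^{*}\leq b$ for the second. This is a two-line calculation and carries no real content.

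The substantive reverse direction I would argue by contrapositive: assuming no $x\geq 0$ satisfies $Ax\leq b$, I need to construct $y\geq 0$ with $yA\geq 0$ and $yb<0$. My approach is the standard separating-hyperplane argument. Introduce the set
\begin{equation*}
K=\{Au+s : u\in\mathbb{R}^{n}_{\geq 0},\; s\in\mathbb{R}^{m}_{\geq 0}\}\subseteq\mathbb{R}^{m},
\end{equation*}
which is the conic hull of the columns of $A$ together with the standard basis vectors $e_{1},\dots,e_{m}$. The infeasibility hypothesis translates precisely to $b\notin K$, since any decomposition $b=Au+s$ with $u,s\geq 0$ would yield $Au=b-s\leq b$. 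Being a finitely generated convex cone, $K$ is closed and convex, so the separating hyperplane theorem produces $y\in\mathbb{R}^{m}$ with $y^{\top}z\geq 0$ for every $z\in K$ and $y^{\top}b<0$. Specializing $z=e_{i}$ yields $y_{i}\geq 0$, hence $y\geq 0$; specializing $z=A_{j}$ (the $j$-th column of $A$) yields $(yA)_{j}\geq 0$, hence $yA\geq 0$. Together with $yb<0$ this gives the desired certificate.

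The main obstacle is the closedness of $K$, which is the content of the Weyl--Minkowski theorem and is not entirely elementary. I would either invoke it as a well-known fact and cite Schrijver~\cite{schrijver}, or avoid the separation route entirely by running Fourier--Motzkin elimination on the system $Ax+Is=b,\ x,s\geq 0$: the elimination procedure mechanically produces either a feasible point or a nonnegative combination of rows witnessing infeasibility, which unpacks directly into the desired $y$. Since the paper already relies on~\cite{schrijver} for linear programming machinery, the cleanest write-up is to present the forward direction explicitly and defer the backward direction to this reference, treating the statement essentially as a black box.
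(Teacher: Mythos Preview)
Your argument is correct, but the paper does not prove this statement at all: it is stated as a \emph{Fact} (a variant of Farkas' lemma) and used as a black box, with Schrijver~\cite{schrijver} in the bibliography for background on linear programming. So your write-up goes considerably beyond what the paper does. Your own final suggestion---present the trivial forward direction and defer the backward direction to~\cite{schrijver}---is exactly in the spirit of how the paper treats it, and would be the appropriate level of detail here.
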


Consider a feasible solution to $DLP_{SF} 2$ and a system defined by constraints of $DLP_{SF} 2$ containing $y_{S_{i,j}}$, i.e.
\begin{align*}
& \sum_{(i,j): S\odot(i,j)} y_{S_{i,j}} \geq y_S			& \forall S \subseteq V\\
& -\sum_{S:S\odot(i,j)} y_{S_{ij}} \geq -\pi_{i,j}			& \forall (i,j) \in V \times V\\
\end{align*}
Farkas Lemma (see Fact~\ref{FarkasLemma}) says that this system has a solution $y_{S_{ij}} \geq 0$ if and only if for each vector $[\alpha | \beta] \geq 0$ with $\alpha_S - \beta_{i,j} \geq 0$ (for each $S, i,j$ such that $S\odot(i,j)$) we have that $\sum_{S\subseteq V} \alpha_S - \sum_{i,j} \beta_{i,j} \pi_{i,j} \leq 0$.
Notice that we can safely replace $\beta_{i,j}$ with $\max_{S: S\odot(i,j)}{\alpha_S}$ which gives us the following constraint:
\begin{align*}
& \sum_{S\subseteq V} \alpha_S \cdot y_S \leq \sum_{i,j} \max_{S: S\odot(i,j)}{\alpha_S} \cdot \pi_{i,j} & \text{for each} \hspace{3mm}\alpha: 2^V \rightarrow \mathbb{R}^+\\
\end{align*}
So our new dual is
\begin{align*}
\text{max} & \sum_{S \subseteq V} y_S & (DLP_{SF} 3)\\
s.t.\\
& \sum_{S:v\in\Gamma(S)} y_S \leq w_v						& \forall v \in V\\
& \sum_{S\subseteq V} \alpha_S \cdot y_S \leq \sum_{i,j} \max_{S: S\odot(i,j)}{\alpha_S} \cdot \pi_{i,j} & \text{for each} \hspace{3mm}\alpha: 2^V \rightarrow \mathbb{R}^+\\
& y_S \geq 0												& \forall S \subseteq V\\
\end{align*}
and it has only one dual variable $y_S$ for each set $S$. On the other hand, it has infinitely many constraints. However, as the lemma below says, many of them are redundant.
\begin{lemma} (Lemma 2.2 in \cite{Hajiaghayi})
It is sufficient to consider $\alpha's$ having only one positive value in its range.
\end{lemma}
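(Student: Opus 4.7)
The plan is to show that the general constraint of $DLP_{SF}\,3$ indexed by an arbitrary $\alpha : 2^V \to \mathbb{R}^+$ can be written as a nonnegative linear combination of constraints whose $\alpha$ takes a single positive value; this immediately implies that the single-positive-value constraints alone determine the feasible region. I would proceed by a standard layer-cake decomposition of $\alpha$.

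First, since $2^V$ is finite, enumerate the distinct positive values in the range of $\alpha$ as $c_1 > c_2 > \cdots > c_m > 0$ and set $c_{m+1} = 0$. For each $k$ define the level family $\mathcal{F}_k = \{ S \subseteq V : \alpha_S \geq c_k \}$ and let $\alpha^{(k)}$ be its $\{0,1\}$-indicator, i.e., $\alpha^{(k)}_S = 1$ if $S \in \mathcal{F}_k$ and $0$ otherwise. A direct check then gives
\begin{align*}
\alpha_S = \sum_{k=1}^m (c_k - c_{k+1})\, \alpha^{(k)}_S \qquad \text{for every } S \subseteq V,
\end{align*}
since for a set $S$ with $\alpha_S = c_j$ the right-hand side telescopes over $k \geq j$ to $c_j - c_{m+1} = c_j$ (and trivially equals $0$ if $\alpha_S = 0$).

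The crux is the analogous identity for the right-hand side of the constraint. For each pair $(i,j)$, let $\ell^*$ be the smallest index $k$ such that some $S \in \mathcal{F}_k$ separates $i$ and $j$, and put $\ell^* = m+1$ if no separating set has positive $\alpha$-value. By minimality of $\ell^*$ the max on the left equals $c_{\ell^*}$; by monotonicity $\mathcal{F}_k \supseteq \mathcal{F}_{\ell^*}$ for $k \geq \ell^*$ the max of $\alpha^{(k)}$ over separating sets is $0$ for $k < \ell^*$ and $1$ for $k \geq \ell^*$, giving
\begin{align*}
\max_{S : S \odot (i,j)} \alpha_S = c_{\ell^*} = \sum_{k \geq \ell^*}(c_k - c_{k+1}) = \sum_{k=1}^m (c_k - c_{k+1}) \max_{S : S \odot (i,j)} \alpha^{(k)}_S.
\end{align*}

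Combining the two identities, the $\alpha$-constraint is exactly the nonnegative combination (with weights $c_k - c_{k+1} \geq 0$) of the constraints associated with the indicator vectors $\alpha^{(1)},\dots,\alpha^{(m)}$. Each $\alpha^{(k)}$ takes only one positive value, so whenever all single-positive-value constraints hold, so does every $\alpha$-constraint. The main obstacle in this plan is justifying the ``$\max$'' identity rigorously; once the telescoping observation above is in place, the remainder is routine linearity bookkeeping.
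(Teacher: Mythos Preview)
Your argument is correct: the layer-cake decomposition of $\alpha$ into nonnegative combinations of $\{0,1\}$-valued level indicators works exactly as you describe, and the telescoping identity for $\max_{S:S\odot(i,j)}\alpha_S$ is the only nontrivial point, which you handle cleanly via the nestedness $\mathcal{F}_1\subseteq\cdots\subseteq\mathcal{F}_m$.

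Note, however, that the paper does not actually prove this lemma at all; it simply quotes it as Lemma~2.2 of Hajiaghayi and Jain~\cite{Hajiaghayi} and moves on. So there is no ``paper's own proof'' to compare against here. Your level-set argument is the standard and natural route (and is essentially how Hajiaghayi and Jain prove it as well): it shows directly that every $\alpha$-constraint is a conic combination of the indicator constraints, which is exactly what ``sufficient'' means in this context. One cosmetic remark: in your sentence ``by monotonicity $\mathcal{F}_k \supseteq \mathcal{F}_{\ell^*}$ for $k \geq \ell^*$ the max of $\alpha^{(k)}$ over separating sets is $0$ for $k < \ell^*$'', the case $k<\ell^*$ follows from the \emph{minimality} of $\ell^*$ rather than from monotonicity, but you already said this a few words earlier, so the logic is intact.
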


The above lemma allows us to think about $\alpha$'s as families of subsets of $V$. Hence we can write our dual as follows
\begin{align}
\text{max} & \sum_{S \subset V} y_S & (DLP_{SF} 4) \nonumber\\
s.t. \nonumber\\
& \sum_{S:v\in\Gamma(S)} y_S \leq w_v																& \forall v \in V \label{ForestConstraint1}\\
& \sum_{S\in \mathbb{S}} y_S \leq \sum_{(i,j) \in V \times V, \mathbb{S}\odot(i,j)} \pi_{i,j}		& \forall \mathbb{S} \in 2^{2^V}\label{ForestConstraint2}\\
& y_S \geq 0																						& \forall S \subset V \nonumber
\end{align}
where $\mathbb{S}\odot(i,j)$ denotes that there exists $S \in \mathbb{S}$ such that $S \odot (i,j)$ (we say that family $\mathbb{S}$ separates vertices $i$ and $j$ if and only if there exists at least one set $S \in \mathbb{S}$ which separates vertices $i$ and $j$).

Note that $\mathbb{S}$ is a family of subsets of vertices and our dual has double exponential number of constraints. But we have now only one dual variable for each set. Intuitively this double exponential number of constraints implicitly ensures that for given variables $y_S$ there exist feasible variables $y_{S_{ij}}$ of the former dual program which sum to $y_S$.

Although the double exponential number of constraints does not sound good, we will be able to construct polynomial-time primal-dual algorithm based on this dual.

We can define a function $f : 2^{2^V} \rightarrow \mathbb{R}_+$ which for every family $\mathbb{S}$ define $f(\mathbb{S})$ to be the right-hand side of the corresponding constraint, i.e.:
$$f(\mathbb{S}) = \sum_{(i,j) \in V \times V, \mathbb{S}\odot(i,j)} \pi_{i,j}$$
In their paper, Hajiaghayi and Jain show that $f$ is submodular. This property allows them to prove the following fact.
\begin{fact} (Corollary 2.2 in \cite{Hajiaghayi})
\label{SumTight}
Suppose $y$ is a feasible solution to dual $DLP_{SF} 4$. Suppose the constraints corresponding to families $\mathbb{S}_1$ and $\mathbb{S}_2$ are tight. Then the constraint corresponding to the family $\mathbb{S}_1 \cup \mathbb{S}_2$ is also tight.
\end{fact}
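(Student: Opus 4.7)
\medskip\noindent\textbf{Proof proposal for Fact~\ref{SumTight}.}

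The plan is a standard uncrossing argument that combines three ingredients: the modularity of the dual objective on the lattice of families, the submodularity of $f$ stated in the sentence preceding the fact, and the feasibility inequality $y(\mathbb{S})\le f(\mathbb{S})$. For notational convenience I will write $y(\mathbb{S}) := \sum_{S \in \mathbb{S}} y_S$, so that constraint~\eqref{ForestConstraint2} reads $y(\mathbb{S}) \le f(\mathbb{S})$ for every family $\mathbb{S}\subseteq 2^V$.

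First I would record the basic modularity of $y$. Since $y(\mathbb{S})$ is simply the sum of $y_S$ over members $S$ of the family, inclusion-exclusion (applied one set at a time) gives
\begin{align*}
y(\mathbb{S}_1) + y(\mathbb{S}_2) \;=\; y(\mathbb{S}_1 \cup \mathbb{S}_2) + y(\mathbb{S}_1 \cap \mathbb{S}_2).
\end{align*}
This identity holds for arbitrary families $\mathbb{S}_1,\mathbb{S}_2$ and does not use feasibility of $y$.

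Next I would invoke submodularity of $f$, which the excerpt asserts (and which follows because each indicator $\mathbb{1}[\mathbb{S}\odot(i,j)]$ is a coverage function on the sub-family $\{S : S\odot(i,j)\}$, and $f$ is a nonnegative combination of such coverage functions). Submodularity states
\begin{align*}
f(\mathbb{S}_1) + f(\mathbb{S}_2) \;\ge\; f(\mathbb{S}_1 \cup \mathbb{S}_2) + f(\mathbb{S}_1 \cap \mathbb{S}_2).
\end{align*}

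Finally I would chain these together with tightness and feasibility. Using tightness of $\mathbb{S}_1$ and $\mathbb{S}_2$, then the modular identity, then feasibility on both $\mathbb{S}_1\cup\mathbb{S}_2$ and $\mathbb{S}_1\cap\mathbb{S}_2$, and finally submodularity of $f$, we get
\begin{align*}
f(\mathbb{S}_1) + f(\mathbb{S}_2)
\;=\; y(\mathbb{S}_1) + y(\mathbb{S}_2)
\;=\; y(\mathbb{S}_1 \cup \mathbb{S}_2) + y(\mathbb{S}_1 \cap \mathbb{S}_2) \\
\;\le\; f(\mathbb{S}_1 \cup \mathbb{S}_2) + f(\mathbb{S}_1 \cap \mathbb{S}_2)
\;\le\; f(\mathbb{S}_1) + f(\mathbb{S}_2).
\end{align*}
All inequalities in this chain must therefore be equalities. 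In particular $y(\mathbb{S}_1 \cup \mathbb{S}_2) = f(\mathbb{S}_1 \cup \mathbb{S}_2)$, which is exactly the statement that the constraint~\eqref{ForestConstraint2} for $\mathbb{S}_1\cup\mathbb{S}_2$ is tight, as desired.

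There is no real obstacle here once submodularity of $f$ is in hand; the only subtle point worth double-checking when writing out the details is the modularity of $y(\cdot)$ on the family lattice, but this is immediate from the pointwise identity $\mathbb{1}[S\in\mathbb{S}_1]+\mathbb{1}[S\in\mathbb{S}_2]=\mathbb{1}[S\in\mathbb{S}_1\cup\mathbb{S}_2]+\mathbb{1}[S\in\mathbb{S}_1\cap\mathbb{S}_2]$ summed against the $y_S$.
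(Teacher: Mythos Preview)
Your argument is correct and is exactly the standard uncrossing proof the paper alludes to: the paper does not spell out a proof but simply cites Hajiaghayi--Jain and notes that submodularity of $f$ is the key property, which is precisely what you use. The only tiny clarification worth adding is that the middle inequality $y(\mathbb{S}_1\cup\mathbb{S}_2)+y(\mathbb{S}_1\cap\mathbb{S}_2)\le f(\mathbb{S}_1\cup\mathbb{S}_2)+f(\mathbb{S}_1\cap\mathbb{S}_2)$ is the sum of two separate feasibility inequalities, so its becoming an equality forces each summand to be tight individually---but you already flagged this by invoking feasibility on both families.
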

\newpage
\subsection{Algorithm}
Without loss of generality we can assume that each terminal $v$ belongs to exactly one demand and its weight $w_v$ is $0$. To see this, construct a new graph where for each vertex $v_i, v_j$ of each demand $(i,j) \in D$ we have additional two vertices $v_i^{ij}$ and $v_j^{ij}$ connected by a single edge to original vertices ($v_i$ and $v_j$ correspondingly). The penalties are now only between vertices $v_i^{ij}$ and $v_j^{ij}$. Weights of new vertices are now $0$ while weights of original vertices $v_i, v_j$ remain the same. It is easy to see that every solution for the new graph can be used to construct a solution of the same cost for the original graph and vice-versa.

Now we are ready to give a primal-dual algorithm for the NWPCSF problem on planar graphs.
The algorithm starts with an initial solution $F$ in which there are all vertices of cost $0$ (hence all terminals). In each iteration the algorithm maintains moats which are the connected components of graph $G$ induced by the vertices of the current solution $F$. Demands can be marked (meaning that we decide to pay a penalty for them) or unmarked. At the beginning all demands are unmarked. Once demand is marked, it stays marked forever. A moat (denoted by the corresponding set $S\subseteq V$) is active in the current iteration if and only if there is at least one unmarked demand $(i,j)$ such that $S \odot (i,j)$. Now in each iteration we simultaneously grow each active moat until one of the following two events occur:
\begin{itemize}
	\item a vertex $v$ goes tight (constraint (\ref{ForestConstraint1}) becomes equality), or
	\item a family $\mathbb{S}$ goes tight (constraint (\ref{ForestConstraint2}) becomes equality).
\end{itemize}

In the first case we simply add $v$ to our solution $F$ (which may make some moats inactive) and continue to the next iteration.

In the second case, we mark each demand $(i,j)$ such that $\mathbb{S} \odot (i,j)$. Hence in the following iterations all moats from $\mathbb{S}$ will be inactive, and we will not violate any constraint during the growth process. We repeat this process until all moats become inactive.

After that we have an additional pruning phase in which we process all vertices of $F$ in the reverse order of buying. We remove a vertex $v$ from $F$ if after its removal from $F$, all unmarked demands are still connected in the graph induced by $F$. We output this pruned set of vertices as $F'$ which is our final solution.

Obtaining $\epsilon_1$ and a tight vertex in line $7$ is straightforward. On the other hand obtaining $\epsilon_2$ in line $8$ and a tight family $\mathbb{S}$ seems to be much harder, since the number of corresponding constraint is double exponential. Fortunately Hajiaghayi and Jain in section $4$ of \cite{Hajiaghayi} gave a polynomial time algorithm for computing $\epsilon_2$ and the corresponding tight family $\mathbb{S}$.

Since the algorithm terminates after at most $2|V|-1$ iterations (in each iteration the number of active moats or the number of connected components decreases), the running time of this algorithm is polynomial.

\begin{algorithm}[H]
	\SetKwData{Left}{left}\SetKwData{This}{this}\SetKwData{Up}{up}
	\SetKwInOut{Input}{input}\SetKwInOut{Output}{output}
	\Input{A planar graph $G=(V,E)$ with non-negative weights $w_i$ on the nodes and non-negative penalties $\pi_{ij}$ between each pair of vertices such that if $\pi_{ij} > 0$ then $w_i=0$ and $w_j=0$}
	\Output{A set of vertices $F'$ representing a forest and a set of pairs $Q'$ representing not connected demands}
	\BlankLine
	\Begin{
		$F\leftarrow \{v_i\in V: w_i = 0\}$\;
		$Q\leftarrow \emptyset$ \tcp{set all demands unmarked}
		$y_S\leftarrow 0$ \tcp{implicitly}
		\BlankLine
		\BlankLine
		$AM \leftarrow \left\{S\subseteq V : S \in SCC\left(G[F]\right) \wedge \displaystyle\mathop{\exists}_{(i,j) \in V\times V - Q } \pi_{ij}>0 \wedge S \odot (i,j) \right\}$\;
		\tcp{identify active moats as components of subgraph of $G$ induced by vertices $F$ for which there is at least one unmarked demand $(i,j)$ which is separated by the corresponding set}
		\BlankLine
		\BlankLine
		\While{AM $ \neq \emptyset$}{
			find minimum $\epsilon_1$ s.t if we increase $y_S$ for each $S\in AM$ by $\epsilon_1$ we get a new tight vertex $v$\;
			find minimum $\epsilon_2$ s.t if we increase $y_S$ for each $S\in AM$ by $\epsilon_2$ we get a new tight family $\mathbb{S}$\;
			$\epsilon \leftarrow min(\epsilon_1,\epsilon_2)$\;
			$y_S \leftarrow y_S + \epsilon$ for all $S\in AM$\;
			\eIf{$\epsilon = \epsilon_1$}{
				$F \leftarrow F \cup \{v\}$\;
			}{
				$Q \leftarrow Q \cup \{(i,j) \in V\times V : \mathbb{S} \odot (i,j)\}$
			}
			$AM \leftarrow \left\{S\subseteq V : S \in SCC\left(G[F]\right) \wedge \displaystyle\mathop{\exists}_{(i,j) \in V\times V - Q } \pi_{ij}>0 \wedge S \odot (i,j) \right\}$\;
		}
		\tcp{pruning phase}
		Derive $F'$ from $F$ by removing vertices in reverse order of purchase so that every unmarked demand is connected in $F'$.\\
		Let $Q'$ be all demands not connected via $F'$
	}
	\caption{Primal-Dual Algorithm for NWPCSF on planar graphs}
\end{algorithm}

\subsection{Analysis}
\begin{theorem}
\label{4aprox}
The algorithm outputs a set of vertices $F'$ and a set of demands $Q'$ which are not connected via $F'$ such that
$$\sum_{v\in F'}w_v + \sum_{(i,j)\in Q'}\pi_{ij} \leq 4\sum_{S\subseteq V}y_S \leq 4\ OPT$$
\end{theorem}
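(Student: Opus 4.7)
The plan is to split $w(F') + \pi(Q')$ into connection and penalty contributions, bound each by a constant multiple of $\sum_{S\subseteq V} y_S$, and then invoke weak LP duality. The second inequality $\sum_S y_S \le OPT$ is immediate once we note that $DLP_{SF}\,4$ is the dual of a reformulation of the LP relaxation of $IP_{SF}$, so any feasible $y$ lower-bounds the LP optimum and hence $OPT$.

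For the penalty bound I would follow the Hajiaghayi--Jain argument resting on Fact~\ref{SumTight}. Let $\mathbb{S}_1,\mathbb{S}_2,\dots$ be the families that became tight during the execution, and set $\mathbb{S}^* = \bigcup_j \mathbb{S}_j$. Iterated application of Fact~\ref{SumTight} shows that constraint \eqref{ForestConstraint2} is tight for $\mathbb{S}^*$, so $\sum_{S\in \mathbb{S}^*} y_S = \sum_{(i,j):\,\mathbb{S}^*\odot(i,j)}\pi_{ij}$. By construction, the marked set $Q$ at termination equals $\{(i,j):\,\mathbb{S}^*\odot(i,j)\}$, and the pruning phase only removes vertices whose removal keeps every unmarked demand connected; therefore $Q'\subseteq Q$ and
$$\sum_{(i,j)\in Q'}\pi_{ij} \;\le\; \sum_{(i,j):\,\mathbb{S}^*\odot(i,j)}\pi_{ij} \;=\; \sum_{S\in\mathbb{S}^*} y_S \;\le\; \sum_{S\subseteq V} y_S.$$

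For the connection cost I adapt the planar node-weighted Steiner Forest analysis of~\cite{Moldenhauer}, in the same spirit as the tree case via Lemma~\ref{lemMainCC} and Lemma~\ref{lemma-graph-white-black}. Since constraint \eqref{ForestConstraint1} is tight for every $v\in F'$,
$$\sum_{v\in F'} w_v \;=\; \sum_{S\subseteq V}|F'\cap \Gamma(S)|\,y_S,$$
so it suffices to prove, for every iteration $i$ with active moats $A_i$, the per-iteration inequality $\sum_{S\in A_i}|F'\cap \Gamma(S)| \le 3|A_i|$. Following the template in Appendix~\ref{App:AppendixB}, one forms the planar minor $G'$ obtained by restricting to $F_i\cup F'$, contracting each inactive moat into a neighboring vertex, and contracting each active moat to a single white vertex while contracting the root component to a blue vertex; black vertices are $F'\setminus F_i$. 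The two hypotheses of Lemma~\ref{lemma-graph-white-black} hold: there are no white--white edges (active moats are vertex-disjoint components), and each black vertex is indispensable (otherwise pruning would have deleted it, since then no unmarked demand would depend on it). The lemma then gives $3(|W|-1) = 3|A_i|$ as an upper bound on the number of white--black edges, and summing $3\epsilon_i|A_i|$ across iterations yields $\sum_{v\in F'} w_v \le 3\sum_S y_S$.

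The main obstacle is checking that this connection-cost analysis is unaffected by the two distinct moat-deactivation mechanisms present here: (a) the usual Steiner-Forest deactivation in which a moat ceases to separate any unmarked demand because components merged, and (b) penalty-tightness in which a family $\mathbb{S}$ marks the demands it separates and instantly deactivates all its moats. What matters for the adjacency count is only which moats are present in $A_i$; both mechanisms simply remove moats from $A_i$, and the pruning rule — ``delete $v$ if every unmarked demand stays connected'' — enforces the indispensability condition needed for Lemma~\ref{lemma-graph-white-black} regardless of why a moat became inactive. Assembling the two bounds gives
$$\sum_{v\in F'} w_v + \sum_{(i,j)\in Q'}\pi_{ij} \;\le\; 3\sum_{S}y_S + \sum_{S}y_S \;=\; 4\sum_{S\subseteq V} y_S \;\le\; 4\,OPT,$$
which is exactly Theorem~\ref{4aprox}.
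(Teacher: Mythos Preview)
Your decomposition into a penalty bound and a connection-cost bound, and the way you establish each, is exactly the paper's proof: the penalty part is Lemma~\ref{pc1lemma} via Fact~\ref{SumTight}, and the connection part is Lemma~\ref{general3lemma} via a per-iteration adjacency count bounded through Lemma~\ref{lemma-graph-white-black}.

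There is one slip in the connection-cost argument. You follow Appendix~\ref{App:AppendixB} too literally: that appendix treats the \emph{rooted tree} problem and therefore contracts a distinguished root component to a blue vertex, giving $|A_i|=|W|-1$. The forest problem has no root, so there is no blue vertex and your equality $3(|W|-1)=3|A_i|$ does not parse. The paper's forest version (Lemma~\ref{Carsten3lemma}) instead restricts to $F_j\cup F'$, discards isolated inactive moats, contracts the remaining inactive moats into neighboring (black) vertices, contracts each active moat to a white vertex, and colors $F'\setminus F_j$ black---only two colors. The resulting $G'$ may be disconnected, so Lemma~\ref{lemma-graph-white-black} is applied to each component, and summing $3(|W_c|-1)$ over components still yields at most $3|W|=3|A_j|$. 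Indispensability of a black vertex $v$ holds because pruning guarantees that deleting $v$ disconnects some unmarked demand $(s,t)$; since $(s,t)$ was already unmarked at iteration~$j$ and $s,t$ lie in distinct moats (otherwise $F_j$ alone would connect them), both of those moats are active and hence white, so removing $v$ separates two white vertices within its component. With this correction your proof is complete and coincides with the paper's.
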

In order to prove Theorem~\ref{4aprox} it is enough to prove the following two lemmas:
\begin{lemma}
\label{pc1lemma}
$$\sum_{(i,j)\in Q'}\pi_{ij} \leq \sum_{S\subseteq V}y_S$$
\end{lemma}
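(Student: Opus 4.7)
The plan is to show that every demand in $Q'$ was already in $Q$ (i.e., marked during the growth phase) and then bound the total penalty of marked demands using the dual, via the union-of-tight-families argument enabled by Fact~\ref{SumTight}.

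First I would observe that $Q' \subseteq Q$. Indeed, the pruning phase only removes a vertex $v$ from $F$ if after its removal every \emph{unmarked} demand remains connected in $F'$. Hence every demand in $V\times V \setminus Q$ is connected via $F'$, so every demand in $Q'$ must lie in $Q$. Therefore $\sum_{(i,j)\in Q'} \pi_{ij} \le \sum_{(i,j)\in Q} \pi_{ij}$, and it suffices to bound the latter.

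Next, let $\mathbb{S}_1, \mathbb{S}_2, \ldots, \mathbb{S}_t$ be the families that went tight during the algorithm (one per occurrence of the second event), and set $\mathbb{S}^* = \bigcup_{k=1}^t \mathbb{S}_k$. By repeated application of Fact~\ref{SumTight} (which relies on submodularity of $f$), the family $\mathbb{S}^*$ is itself tight, i.e.
\begin{align*}
\sum_{S \in \mathbb{S}^*} y_S \;=\; \sum_{(i,j)\,:\,\mathbb{S}^* \odot (i,j)} \pi_{ij}.
\end{align*}
On the other hand, a demand is added to $Q$ precisely when it is separated by one of the $\mathbb{S}_k$'s, so every $(i,j)\in Q$ satisfies $\mathbb{S}^* \odot (i,j)$. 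Thus
\begin{align*}
\sum_{(i,j)\in Q} \pi_{ij} \;\le\; \sum_{(i,j)\,:\,\mathbb{S}^* \odot (i,j)} \pi_{ij} \;=\; \sum_{S \in \mathbb{S}^*} y_S \;\le\; \sum_{S \subseteq V} y_S,
\end{align*}
since all $y_S \ge 0$. Combining with $Q' \subseteq Q$ yields the claimed inequality.

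The only subtle point, which I would double-check, is the legitimacy of aggregating all tight families into a single tight family $\mathbb{S}^*$. This is exactly the content of Fact~\ref{SumTight} (Corollary 2.2 in~\cite{Hajiaghayi}), extended inductively from two families to $t$ via associativity of union. Once that is in hand, the argument is essentially one line: marked demands are charged against the tight family $\mathbb{S}^*$, and this charge is paid for by the duals $y_S$ with $S \in \mathbb{S}^*$, which themselves are dominated by the total dual $\sum_{S \subseteq V} y_S$. I do not anticipate any further obstacle.
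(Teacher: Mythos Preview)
Your proof is correct and follows essentially the same approach as the paper: you establish $Q'\subseteq Q$, take the union $\mathbb{S}^*$ of all tight families, use Fact~\ref{SumTight} to conclude that $\mathbb{S}^*$ is tight, and then chain the inequalities $\sum_{(i,j)\in Q'}\pi_{ij}\le\sum_{(i,j)\in Q}\pi_{ij}\le\sum_{\mathbb{S}^*\odot(i,j)}\pi_{ij}=\sum_{S\in\mathbb{S}^*}y_S\le\sum_{S\subseteq V}y_S$. The paper's proof is structurally identical, differing only in notation ($\mathbb{S}_{all}$ in place of your $\mathbb{S}^*$).
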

\begin{proof}
First observe that $Q' \subseteq Q$ where $Q$ are marked pairs. Now consider families $\mathbb{S}_1,\dots,\mathbb{S}_f$ which went tight during the run of the algorithm. Observe that each marked pair was separated by some $\mathbb{S}_i$.\\
Hence family $\mathbb{S}_{all} = \displaystyle\mathop{\cup}_{j=1}^{f}\mathbb{S}_j$ separates each marked pair.
From Fact~\ref{SumTight} the union of tight families is tight. Putting it all together gives:\\
$$\sum_{(i,j)\in Q'}\pi_{ij} \leq \sum_{(i,j)\in Q}\pi_{ij} \leq \sum_{\mathbb{S}_{all}\odot (i,j)}\pi_{ij} = \sum_{S\in\mathbb{S}}y_S \leq \sum_{S\subseteq V}y_S$$ \qed
\end{proof}

\begin{lemma}
\label{general3lemma}
$$\sum_{v\in F'}w_v \leq 3\sum_{S\subseteq V}y_S$$
\end{lemma}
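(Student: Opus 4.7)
The plan is to mirror the NWPCST proof in Appendix~\ref{App:AppendixB}. First, I would invoke complementary slackness: constraint~\eqref{ForestConstraint1} is tight for every $v\in F'$ (otherwise $v$ would not have been bought), which yields
$$\sum_{v\in F'}w_v \;=\; \sum_{v\in F'}\sum_{S:\,v\in\Gamma(S)}y_S \;=\; \sum_{S\subseteq V}|F'\cap\Gamma(S)|\,y_S.$$
It therefore suffices to prove $\sum_{S}|F'\cap\Gamma(S)|\,y_S\le 3\sum_S y_S$. I would do this by induction on the iterations: since in iteration $i$ every active moat of $A_i$ grows by the same amount $\epsilon_i$, the induction step reduces to the per-iteration inequality
$$\sum_{S\in A_i}|F'\cap\Gamma(S)| \;\le\; 3\,|A_i|. \qquad(\star)$$

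To establish $(\star)$ I would build a planar witness graph $G'_i$ exactly as in the tree case: restrict $G$ to the subgraph induced on $F_i\cup F'$, contract every active moat to a single \textit{white} vertex, contract every inactive moat with an adjacent purchased vertex (or, if none exists, keep it as a \textit{blue} anchor), and color the remaining vertices of $F'\setminus F_i$ \textit{black}. Contractions preserve planarity, and two whites cannot share an edge because distinct moats are disjoint components of $G[F_i]$. The crucial disconnection property---removing any black vertex $v$ disconnects $G'_i$---follows from the pruning rule: $v$ survives pruning only because removing it from $F'$ disconnects some unmarked demand pair $(s,t)$, which in the contracted graph amounts to a white-white, white-blue, or blue-blue separation. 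Invoking Lemma~\ref{lemma-graph-white-black} (with any blue anchor temporarily recolored white, as in the tree proof) then bounds the number of white-black edges by $3(|W|-1)$, which matches $3|A_i|$ once the blue anchors are accounted for.

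The main obstacle I expect is the absence of a unique root in the forest setting: in the tree proof the single blue root moat cleanly produces $|A_i|=|W|-1$, whereas here there can be several inactive moats with no natural neighbor to be absorbed into. I would need to verify that either each such inactive moat can be contracted into an adjacent black vertex without destroying planarity or the black-vertex disconnection property, or that the entire family of isolated inactive moats can be jointly represented by a single blue anchor, so that the final count $|W|-1$ still matches $|A_i|$. Once this contraction step is justified, the Euler-formula argument underlying Lemma~\ref{lemma-graph-white-black} applies verbatim and yields $(\star)$, finishing the proof.
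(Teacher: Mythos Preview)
Your overall strategy---tightness of constraint~\eqref{ForestConstraint1}, induction on iterations, and a planar contraction reducing to Lemma~\ref{lemma-graph-white-black}---is exactly the paper's. The per-iteration inequality $(\star)$ you isolate is precisely what the paper packages as Lemma~\ref{Carsten3lemma}, stated there for the minimal feasible augmentation $F_{aug}=F_i\cup F'$; since a moat $S$ is a connected component of $G[F_i]$ one has $\Gamma(S)\cap F_i=\emptyset$, so $F'\cap\Gamma(S)=(F_i\cup F')\cap\Gamma(S)$ and the two formulations coincide.

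The one place where you diverge from the paper is the treatment of inactive moats, and there the paper's route is strictly simpler than what you sketch. You keep isolated inactive moats as ``blue anchors'' and then worry (correctly) that Lemma~\ref{lemma-graph-white-black} only delivers $3(|W|-1)$, which no longer matches $3|A_i|$ if several blues are recolored white. The paper avoids this entirely by \emph{discarding} isolated inactive moats. The reason this is legitimate is an observation you almost make but do not exploit: if a black vertex $v\in F'\setminus F_i$ survived pruning because its removal disconnects an unmarked pair $(s,t)$, then $s$ and $t$ lie in \emph{distinct} moats at iteration $i$ (since $v\notin F_i$, they are already disconnected in $G[F_i]$), and each of those moats separates $(s,t)$ and is therefore \emph{active}. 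Hence deleting a black vertex always separates two \emph{white} vertices---your ``white--blue'' and ``blue--blue'' cases never occur. Inactive moats play no role in the disconnection property and can be thrown away; after that, Lemma~\ref{lemma-graph-white-black} applied per component gives $|E'|\le 3(|W|-c)\le 3|W|=3|A_i|$ with no blue vertex needed. This dissolves the obstacle you identified.
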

To prove Lemma~\ref{general3lemma} we will use an auxiliary lemma. But first, let us introduce one definition.\\
For a set of nodes $F$ and the set of unmarked demands $R=D-Q$ define a minimal feasible augmentation $F_{aug}$ of $F$ with respect to $R$ to be a set of vertices $F_{aug}$ containing $F$ as a subset such that every pair of vertices from $R$ is connected in the subgraph of $G$ induced by $F_{aug}$ and such that removal of any $v\in F_{aug} - F$ from $F_{aug}$ disconnects some pair from $R$.

\begin{lemma}
\label{Carsten3lemma}
Let $G$ be planar, $R$ be the set of unmarked demands after running the above algorithm, $F_j$ be the set of bought vertices before running iteration $j$ and $F_{aug}$ be a minimal feasible augmentation of $F_j$ with respect to $R$. Let also $A_j$ be the set of active moats before running iteration $j$. Then\\
$$\sum_{S\in{A_j}}|F_{aug} \cap \Gamma(S)| \leq 3|A_j|$$
\end{lemma}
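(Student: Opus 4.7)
The plan is to mimic the graph-theoretic argument used for Lemma~\ref{lemMainCC}, extended from the tree to the forest setting by applying the combinatorial bound of Lemma~\ref{lemma-graph-white-black} separately to each connected component of an auxiliary planar graph. Concretely, I would build $G'$ by taking the subgraph of $G$ induced by $F_{aug}$ (which contains $F_j$), contracting each active moat $S\in A_j$ into a single \emph{white} vertex, contracting each inactive moat at iteration $j$ into any one of its neighbours in $F_{aug}\setminus F_j$ (discarding any isolated inactive moat, which can carry no endpoint of an unmarked pair), and colouring every remaining vertex of $F_{aug}\setminus F_j$ \emph{black}. Deletions and contractions preserve planarity, so $G'$ is a simple planar graph.

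Two structural observations are needed up front. First, no two white vertices are adjacent in $G'$: distinct active moats are distinct connected components of $G[F_j]$, so no edge of $G$ runs between them. Second, the quantity to be bounded equals the number of black--white edges in $G'$: for any component $S$ of $G[F_j]$ we have $\Gamma(S)\cap F_j=\emptyset$, hence $F_{aug}\cap\Gamma(S) = (F_{aug}\setminus F_j)\cap\Gamma(S)$, and each such adjacency becomes exactly one black--white edge after contraction.

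The key property is that in every connected component $C$ of $G'$ the removal of any black vertex disconnects $C$. If $b$ is such a black vertex, then it corresponds to some $v\in F_{aug}\setminus F_j$ together with (possibly) some absorbed inactive moats; by minimality of $F_{aug}$, removing $v$ alone from $F_{aug}$ already disconnects some unmarked pair $(s,t)\in R$ in $G[F_{aug}]$, and removing the rest of the absorbed set only disconnects more. Both $s$ and $t$ lie in distinct moats at iteration $j$ (otherwise they would be connected through that common moat, which lies in $F_j\setminus\{v\}$), and each such moat separates the unmarked pair and is therefore active. The two corresponding white vertices are in $C$ and become disconnected once $b$ is removed, so $b$ is a cut vertex of $C$. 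In particular, every component of $G'$ containing a black vertex also contains at least two white vertices.

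Finally I would apply Lemma~\ref{lemma-graph-white-black} to each connected component $C$ of $G'$ with $w_C \ge 1$ white vertices, obtaining at most $3(w_C - 1)$ black--white edges inside $C$. Summing,
\begin{align*}
\sum_{S\in A_j}|F_{aug}\cap\Gamma(S)| \;\le\; \sum_{C} 3(w_C-1) \;\le\; 3|A_j|,
\end{align*}
since $\sum_C w_C\le |A_j|$. The main obstacle I expect is the bookkeeping required by the forest setting: correctly handling isolated and absorbed inactive moats so that the black--white edge count in $G'$ faithfully reflects $\sum_{S\in A_j}|F_{aug}\cap\Gamma(S)|$, and verifying the cut-vertex hypothesis of Lemma~\ref{lemma-graph-white-black} on every component --- in particular the fact that removing more vertices than just $v$ cannot restore the connectivity that minimality of $F_{aug}$ already destroys.
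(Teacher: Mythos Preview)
Your proposal is correct and follows essentially the same approach as the paper: construct the auxiliary planar graph $G'$ by restricting to $F_{aug}$, discarding isolated inactive moats, contracting inactive moats into a neighbouring vertex, contracting active moats to white vertices, colouring $F_{aug}\setminus F_j$ black, and then invoking Lemma~\ref{lemma-graph-white-black} componentwise. Your write-up is in fact more careful than the paper's sketch in two places---you justify explicitly why the endpoints $s,t$ of the disconnected unmarked pair land in \emph{active} moats, and you spell out the per-component summation $\sum_C 3(w_C-1)\le 3|A_j|$---but the underlying argument is the same.
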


Before we prove Lemma~\ref{Carsten3lemma} we will show how it helps us in proving Lemma~\ref{general3lemma}.\\
\begin{proof}[Proof of Lemma~\ref{general3lemma}]\\
Since we add a vertex $v$ to $F$ only if it is tight, and after that we do not modify variables corresponding to sets adjacent to $v$, we have the following equality
\begin{align*}
\sum_{v\in F'}w_v & = \sum_{v\in F'} \sum_{S:v\in\Gamma(S)}y_S = \sum_{S\subseteq V}|F' \cap \Gamma(S)|y_S
\end{align*}
(in the last step we changed the order of the summation).

Now let $F'$ be the output of the algorithm, $R=D-Q$ be the set of all unmarked demands, $F_j$ be the set of bought vertices before iteration $j$, $A_j$ be the set of active moats before running iteration $j$ and $\epsilon_j$ be the increase of the dual variables in iteration $j$. Then for each $S\subseteq V$ we have $y_S = \sum_{j:S \in A_j} \epsilon_j$ hence the following holds:
$$\sum_{S\subseteq V} y_S = \sum_j |A_j|\epsilon_j$$
and
\begin{align*}
\sum_{v\in F'}w_v & = \sum_{S\subseteq V}|F' \cap \Gamma(S)|y_S  = \sum_{S\subseteq V}|F' \cap \Gamma(S)|\sum_{j : S\in A_j}\epsilon_j\\
	& = \sum_j\left(\sum_{S\in A_j}|F' \cap \Gamma(S)|\right)\epsilon_j
\end{align*}

Observe now, that $F_{aug} = F_j \cup F'$ is a minimal feasible augmentation of $F_j$ with respect to $R$. Obviously, every demand from $R$ is connected in $F_{aug}$. Consider any vertex $v \in F' - F_j$. Removing $v$ from $F_{aug}$ will make some pair from $R$ disconnected because otherwise $v$ would be deleted in the pruning phase. Hence we can use Lemma~\ref{Carsten3lemma}:
$$\sum_{S\in{A_j}}|(F_j \cup F') \cap \Gamma(S)| \leq 3|A_j|$$
Since $|F' \cap \Gamma(S)| \leq |(F_j \cup F') \cap \Gamma(S)|$ we have
\begin{align*}
\sum_{v\in F'}w_v & = \sum_j\left(\sum_{S\in A_j}|F' \cap \Gamma(S)|\right)\epsilon_j\\
	& \leq \sum_j3|A_j|\epsilon_j = 3\sum_{S\subseteq V}y_S
\end{align*} \qed
\end{proof}

Now we conclude with the sketch of the proof of Lemma~\ref{Carsten3lemma}.
\begin{proof}[of Lemma~\ref{Carsten3lemma}]
The proof is conducted in a similar way as the proof of Lemma~\ref{lemMainCC} and the analysis is essentially the same as in \cite{Moldenhauer}.
We need to count the adjacencies between active moats and vertices from $F_{aug} - F_j$. Consider the graph $G'$ obtained from $G$ in the following way:
\begin{enumerate}
	\item take the subgraph of $G$ induced by vertices from $F_{aug}$
	\item discard isolated inactive moats
	\item contract each inactive moat with a neighboring vertex 
	\item contract each active moat
\end{enumerate}
Next color vertices of $G'$ in two colors:
\begin{itemize}
	\item white color for vertices obtained from contracting active moats
	\item black color for all other vertices, i.e. $F_{aug} - F_j$
\end{itemize}
Observe now that deleting a black vertex in $G'$ disconnects two white vertices, because otherwise it would be deleted in the pruning phase. $G'$ remains planar, since deletions and contractions preserve planarity. Moreover, it is easy to see that the number of adjacencies $\sum_{S\in{A_j}}|(F_{aug}) \cap \Gamma(S)|$ in $G$ is the same as the number of edges between white and black vertices in $G'$. To bound this number we use Lemma~\ref{lemma-graph-white-black} for each component of $G'$. Therefore we have $$\sum_{S\in{A_j}}|(F_{aug}) \cap \Gamma(S)| \leq 3|A_j|$$ \qed
\end{proof}

\end{document}